\newtheorem{definition}{Definition}
\newtheorem{lemma}{Lemma}
\newtheorem{theorem}{Theorem}
\newcommand{\ketbra}[2]{{\left| #1 \middle\rangle \middle\langle #2 \right|}}
\newcommand{\fstate}{\phi} % flip state
\newcommand{\ustate}{\sigma} % uniform state
\newcommand{\rstate}{\rho} % remainder state (not flip or uniform)
\newcommand{\fcoef}{\alpha} % coefficient of flip state
\newcommand{\ucoef}{\beta}  % coefficient of uniform state
\newcommand{\rcoef}{\gamma} % coefficient of remainder state
\newcommand{\avgin}[2]{\overline{#1_{#2}^{\tin}}}
\newcommand{\avgout}[2]{\overline{#1_{#2}^{\tout}}}
\newcommand{\flip}[1]{\widetilde{#1}}
\DeclareMathOperator{\tout}{out}
\DeclareMathOperator{\tin}{in}
\begin{document}

\title{Oscillatory Localization of Quantum Walks Analyzed by Classical Electric Circuits}

\author{Andris Ambainis}
	\email{andris.ambainis@lu.lv}
\author{Kri\v{s}j\={a}nis Pr\={u}sis}
	\email{krisjanis.prusis@lu.lv}
\author{Jevg\={e}nijs Vihrovs}
	\email{jevgenijs.vihrovs@lu.lv}
\author{Thomas G.~Wong}
	\email{Currently at University of Texas at Austin, twong@cs.utexas.edu}
	\affiliation{Faculty of Computing, University of Latvia, Rai\c{n}a bulv.~19, R\=\i ga, LV-1586, Latvia}

\begin{abstract}
	We examine an unexplored quantum phenomenon we call oscillatory localization, where a discrete-time quantum walk with Grover's diffusion coin jumps back and forth between two vertices. We then connect it to the power dissipation of a related electric network. Namely, we show that there are only two kinds of oscillating states, called uniform states and flip states, and that the projection of an arbitrary state onto a flip state is bounded by the power dissipation of an electric circuit. By applying this framework to states along a single edge of a graph, we show that low effective resistance implies oscillatory localization of the quantum walk. This reveals that oscillatory localization occurs on a large variety of regular graphs, including edge-transitive, expander, and high degree graphs. As a corollary, high edge-connectivity also implies localization of these states, since it is closely related to electric resistance.
\end{abstract}

\pacs{03.67.-a, 05.40.Fb, 07.50.Ek, 84.30.-r, 02.10.Ox}

\maketitle

%-------------------------------------------------------------------------------
% Main Matter
%-------------------------------------------------------------------------------

\section{\label{sec:introduction} Introduction}

Localization was first observed in discrete-time quantum walks roughly fourteen years ago by Mackay et al.~\cite{Mackay2002}, whose numerical simulations demonstrated that an initially localized quantum walker on the two-dimensional (2D) lattice had a high probability of remaining at its initial location. This behavior was further investigated numerically by Tregenna et al.~\cite{Tregenna2003}, and it was analytically proved to persist for all time by Inui, Konishi, and Konno \cite{Inui2004}. Since this seminal work, localization in quantum walks has been an area of thriving research (see Section 2.2.9 of \cite{Venegas2012} for an overview and the references therein). Such localization is a purely quantum phenomenon, starkly different from the diffusive behavior of classical random walks. Furthermore, the ability to localize a quantum walker has potential applications in quantum optics, quantum search algorithms, and investigating topological phases \cite{Kitagawa2010,Kollar2015}.

In this paper, we introduce a new type of localization where the quantum walker jumps back and forth between two locations, so we term it \emph{oscillatory localization}. The first hint of this behavior appears in Inui, Konishi, and Konno's aforementioned analysis of the 2D walk \cite{Inui2004}, where the probability of finding the walker at its initial location is high at even times and small at odd times. Our analysis shows that this is due to the walker jumping back and forth between its initial location and an adjacent site, and in this paper, we prove that it occurs on a wide variety of graphs, including complete graphs, complete bipartite graphs, hypercubes, square lattices of high dimension, expander graphs, and high degree graphs. 

The quantum walk is defined on a graph of $N$ vertices \cite{Aharonov2001}, so the walker jumps in superposition from vertex to vertex with the edges defining the allowed transitions. For simplicity, we assume that the graph is regular with degree $d$. Then besides the position space, we include an additional $d$-dimensional internal ``coin'' degree of freedom in order to define a non-trivial walk \cite{Meyer1996a,Meyer1996b}, spanned by the $d$ directions along which the walker can hop. With this, the full Hilbert space of the system is $\mathbb{C}^N \otimes \mathbb{C}^d$. Then the quantum walk is defined by repeated applications of the operator
\begin{equation}
	\label{eq:U}
	U = S \cdot (I_N \otimes C),
\end{equation}
where $C$ is the ``coin flip'' that acts on the internal state of the system, and $S$ is the shift operator that causes the walker to move based on its internal state. As with most prior work on quantum walks and localization \cite{Kollar2015}, throughout this paper, we choose $C$ to be the ``Grover coin'' \cite{SKW2003}:
\[ C = 2 \ketbra{s_c}{s_c} - I_d, \]
where $\ket{s_c} = \sum_{i=1}^d \ket{i} / \sqrt{d}$ is the equal superposition over the coin space.

For the shift operator $S$, most papers on localization focus on 1D or 2D lattices, so they typically use the ``moving'' shift $S_m$, where the particle jumps and continues pointing in the same direction. For example, on the 1D line, $S_m (\ket{0} \otimes \ket{\rightarrow}) = \ket{1} \otimes \ket{\rightarrow}$. In our paper, however, we include non-lattice graphs as well. On non-lattice graphs, the moving shift's notion of staying in the same direction is unclear, requiring an additional labeling of directed edges that form a permutation \cite{Aharonov2001}. We mitigate this by instead using the ``flip-flop'' shift \cite{SKW2003}, where the particle jumps and then turns around. For example, on the 1D line, $S (\ket{0} \otimes \ket{\rightarrow}) = \ket{1} \otimes \ket{\leftarrow}$. Besides having a natural definition on non-lattice graphs, the flip-flop shift is also important from an algorithmic standpoint; for spatial search, where the quantum walk is supplemented by an oracle, there are cases where the flip-flop shift yields a quantum speedup while the moving shift yields no improvement over classical \cite{AKR2005}. For more benefits of the flip-flop shift, see the introduction of \cite{Wong17}.

In the next section, we give a simple example of oscillatory localization of the quantum walk on the complete graph. The analysis is straightforward enough that the exact evolution can be determined using basic linear algebra. This forms intuition for more advanced analytical techniques, beginning in Section \ref{sec:exact}. There, we determine the eigenvectors of $U^2$ with eigenvalue 1. We show that there are only two different types, which we call uniform states and flip states, and they form a complete orthogonal basis for exact oscillatory states. So the projection of an arbitrary state onto these gives a lower bound on the extent of the oscillation, as shown in Section \ref{sec:approximate}. While the projection of an arbitrary state onto uniform states is trivial, it is much more challenging to find its projection onto flip states. 

So for the rest of the paper, we develop a method for lower-bounding the projection onto flip states using classical electric networks in Section \ref{sec:circuits}. To do this, we define a bijection between flip states and circulation flows in a related graph. Since electric current is a circulation flow, we prove that oscillations on a graph occurs if the power dissipation on a related electric network is low. Then we apply this framework to certain localized starting states, showing that effective resistance can be used instead of power dissipation. That is, low electric resistance implies oscillatory localization of these states of the quantum walk. Since effective resistance is inversely related to edge-connectivity, it follows that high edge-connectivity also implies localization for the particular starting states.

Finally, in Section \ref{sec:examples}, we apply this network formulation to several examples, proving that oscillatory localization occurs on a wide variety of regular graphs, including complete graphs, complete bipartite graphs, hypercubes, square lattices of high dimension, expander graphs, and high degree graphs.

Several connections between effective resistance and classical random walks are known, such as with hitting time, commute time, and cover time \cite{Doyle1984, Tetali1991, Chandra1996}. For quantum walks, however, such connections are relatively new. Belovs \textit{et al.}\ \cite{Belovs2013} bounds the running time of a quantum walk algorithm for 3-Distinctness in terms of the resistance of a graph.

%-------------------------------------------------------------------------------
% Section
%-------------------------------------------------------------------------------

\begin{figure}
\begin{center}
	\includegraphics{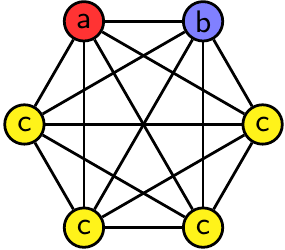}
	\caption{\label{fig:complete} The complete graph of $N = 6$ vertices. The quantum walker begins at vertex $a$, pointing at vertex $b$. By symmetry, the remaining vertices, labeled $c$, evolve identically.}
\end{center}
\end{figure}

\begin{figure}
\begin{center}
	\includegraphics{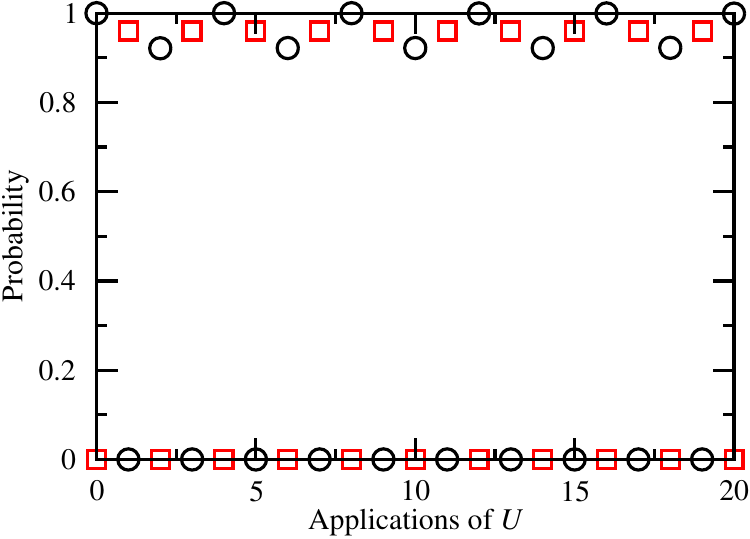}
	\caption{\label{fig:complete_prob_time_ab_ba_N16} Quantum walk on the complete graph of $N = 16$ vertices, starting in $\ket{ab}$. The probability in $\ket{ab}$ (black circles) and $\ket{ba}$ (red squares) as the quantum walk $U$ is applied.}
\end{center}
\end{figure}

\begin{table}
	\caption{\label{table:complete_prob_time_ab_ba_N16} For the quantum walk on the complete graph of $N = 16$ vertices, the probability and amplitude in $\ket{ab}$ and $\ket{ba}$.}
	\begin{ruledtabular}
	\begin{tabular}{ccccc}
		$t$ & $| \langle ab | U^t | ab  \rangle |^2$ & $| \langle ba | U^t | ab \rangle |^2$ & $\langle ab | U^t | ab \rangle$ & $\langle ba | U^t | ab \rangle$ \\
		\colrule
		$0$ & $1$ & $0$ & $1$ & $0$ \\
		$1$ & $0$ & $0.960004$ & $0$ & $-0.979798$ \\
		$2$ & $0.921608$ & $0$ & $0.960004$ & $0$ \\
		$3$ & $6.52861 \times 10^{-7}$ & $0.960004$ & $0.000807998$ & $-0.979798$ \\
		$4$ & $0.999967$ & $0$ & $0.999984$ & $0$ \\
		$5$ & $6.52329 \times 10^{-7}$ & $0.960004$ & $-0.000807669$ & $-0.979798$ \\
		$6$ & $0.921671$ & $0$ & $0.960037$ & $0$ \\
		$7$ & $2.60825 \times 10^{-6}$ & $0.960004$ & $0.00161501$ & $-0.979798$ \\
		$8$ & $0.999869$ & $0$ & $0.999935$ & $0$ \\
		$9$ & $2.60399 \times 10^{-6}$ & $0.960004$ & $-0.00161369$ & $-0.979798$ \\
		$10$ & $0.921796$ & $0$ & $0.960102$ & $0$ \\
		$11$ & $5.855 \times 10^{-6}$ & $0.960004$ & $0.00241971$ & $-0.979798$ \\
		$12$ & $0.999707$ & $0$ & $0.999853$ & $0$ \\
		$13$ & $5.84066 \times 10^{-6}$ & $0.960004$ & $-0.00241675$ & $-0.979798$ \\
		$14$ & $0.921983$ & $0$ & $0.9602$ & $0$ \\
		$15$ & $0.0000103735$ & $0.960004$ & $0.00322079$ & $-0.979798$ \\
		$16$ & $0.999479$ & $0$ & $0.999739$ & $0$ \\
		$17$ & $0.0000103396$ & $0.960004$ & $-0.00321553$ & $-0.979798$ \\
		$18$ & $0.922233$ & $0$ & $0.96033$ & $0$ \\
		$19$ & $0.0000161359$ & $0.960004$ & $0.00401695$ & $-0.979798$ \\
		$20$ & $0.999187$ & $0$ & $0.999593$ & $0$ \\
	\end{tabular}
	\end{ruledtabular}
\end{table}

\section{\label{sec:complete} Localization on the Complete Graph}

We begin with a simple example of oscillatory localization on the complete graph of $N$ vertices, an example of which is shown in Fig.~\ref{fig:complete}. As depicted in the figure, the walker is initially located at a single vertex, labeled $a$, and points towards another vertex, labeled $b$. That is, the initial state of the system is $\ket{a} \otimes \ket{a \to b}$. By the symmetry of the quantum walk, all the other vertices will evolve identically; let us call them $c$ vertices. Grouping them together, we obtain a 7D subspace for the evolution of the system:
\begin{align*}
	\ket{ab} &= \ket{a} \otimes \ket{a \to b} \\
	\ket{ac} &= \ket{a} \otimes \frac{1}{\sqrt{N-2}} \sum_{c} \ket{a \to c} \\
	\ket{ba} &= \ket{b} \otimes \ket{b \to a} \\
	\ket{bc} &= \ket{b} \otimes \frac{1}{\sqrt{N-2}} \sum_{c} \ket{b \to c} \\
	\ket{ca} &= \frac{1}{\sqrt{N-2}} \sum_{c} \ket{c} \otimes \ket{c \to a} \\
	\ket{cb} &= \frac{1}{\sqrt{N-2}} \sum_{c} \ket{c} \otimes \ket{c \to b} \\
	\ket{cc} &= \frac{1}{\sqrt{N-2}} \sum_{c} \ket{c} \otimes \frac{1}{\sqrt{N-3}} \sum_{c' \sim c} \ket{c \to c'}.
\end{align*}
So the system begins in $\ket{ab}$. The system evolves by repeated applications of the quantum walk operator \eqref{eq:U}, which in the $\{ \ket{ab}, \ket{ac}, \ket{ba}, \ket{bc}, \ket{ca}, \ket{cb}, \ket{cc} \}$ basis is
\begin{widetext}
\[ U = \begin{pmatrix}
	0 & 0 & -\frac{N-3}{N-1} & \frac{2\sqrt{N-2}}{N-1} & 0 & 0 & 0 \\
	0 & 0 & 0 & 0 & -\frac{N-3}{N-1} & \frac{2}{N-1} & \frac{2\sqrt{N-3}}{N-1} \\
	-\frac{N-3}{N-1} & \frac{2\sqrt{N-2}}{N-1} & 0 & 0 & 0 & 0 & 0 \\
	0 & 0 & 0 & 0 & \frac{2}{N-1} & -\frac{N-3}{N-1} & \frac{2\sqrt{N-3}}{N-1} \\
	\frac{2\sqrt{N-2}}{N-1} & \frac{N-3}{N-1} & 0 & 0 & 0 & 0 & 0 \\
	0 & 0 & \frac{2\sqrt{N-2}}{N-1} & \frac{N-3}{N-1} & 0 & 0 & 0 \\
	0 & 0 & 0 & 0 & \frac{2\sqrt{N-3}}{N-1} & \frac{2\sqrt{N-3}}{N-1} & \frac{N-5}{N-1} \\
\end{pmatrix}. \]
\end{widetext}
This matrix can be obtained by explicit calculation, or by using Eq.~(9) of \cite{Wong18}.

In Fig.~\ref{fig:complete_prob_time_ab_ba_N16}, we plot the probability in $\ket{ab}$ (black circles) and $\ket{ba}$ (red squares) as the quantum walk evolves, and we see that the system roughly oscillates between the two states. This is an example of oscillatory localization. As the number of vertices $N$ increases, the probability in each state at even and odd times goes to 1, so the oscillation becomes more and more certain. The precise numerical values for these probabilities are shown in Table~\ref{table:complete_prob_time_ab_ba_N16}, along with their corresponding amplitudes.

To prove that this oscillatory localization between $\ket{ab}$ and $\ket{ba}$ persists for all time, we express the initial state $\ket{ab}$ in terms of the eigenvectors and eigenvalues of $U$. The (unnormalized) eigenvectors and eigenvalues are
\begin{widetext}
\begin{alignat*}{2}
	\psi_1 &= \left( \frac{1}{N},\ 0,\ -\frac{N-3}{N(N-1)},\ \frac{2 \sqrt{N-2}}{N(N-1)},\ \frac{2 \sqrt{N-2}}{N(N-1)},\ 0,\ \frac{\sqrt{(N-2)(N-3)}}{N(N-1)} \right)^\intercal \!\!\! , \quad &1 \\
	\psi_1' &= \frac{(N-2)(N-3)}{2N(N-1)} \left( 1,\ \frac{-1}{\sqrt{N-2}},\ -1,\ \frac{1}{\sqrt{N-2}},\ \frac{1}{\sqrt{N-2}},\ \frac{-1}{\sqrt{N-2}},\ 0 \right)^\intercal \!\!\! , \quad &1 \\
	\psi_{-1} &= \frac{N-3}{2(N-1)} \left( 1,\ \frac{-1}{\sqrt{N-2}},\ 1,\ \frac{-1}{\sqrt{N-2}},\ \frac{-1}{\sqrt{N-2}},\ \frac{-1}{\sqrt{N-2}},\ \frac{2}{\sqrt{(N-2)(N-3)}} \right)^\intercal \!\!\! , &\hspace{.5cm}-1 \\
	\psi_+ &= \frac{1}{N[(N-2) - i\sqrt{N(N-2)}]} \Bigg( 1 - i\sqrt{N(N-2)},\ -i (N-3) \sqrt{N},\ N-3, \\
		&\quad\quad\quad -2 \sqrt{N-2},\ (N-2)^{3/2} + i \sqrt{N},\ 0,\ -\sqrt{N-3} (\sqrt{N-2} + i\sqrt{N}) 
	\Bigg)^\intercal \!\!\! , &e^{i\theta} \\
	\psi_+' &= \frac{N-3}{N[(N-2)-i\sqrt{N(N-2)}]} \Bigg( 1,\ \frac{1}{2} \left( \sqrt{N-2} + i\sqrt{N} \right),\ -1, \\
		&\quad\quad\quad \frac{-1}{2} \left( \sqrt{N-2} + i\sqrt{N} \right),\ \frac{-1}{2} \left( \sqrt{N-2} - i\sqrt{N} \right),\ \frac{1}{2} \left( \sqrt{N-2} - i\sqrt{N} \right),\ 0
	\Bigg)^\intercal \!\!\! , &e^{i\theta} \\
	\psi_- &= \frac{1}{N[(N-2) + i\sqrt{N(N-2)}]} \Bigg( 1 + i\sqrt{N(N-2)},\ i (N-3) \sqrt{N},\ N-3, \\
		&\quad\quad\quad -2 \sqrt{N-2},\ (N-2)^{3/2} - i \sqrt{N},\ 0,\ -\sqrt{N-3} (\sqrt{N-2} - i\sqrt{N}) 
	\Bigg)^\intercal \!\!\! , &e^{-i\theta} \\
	\psi_-' &= \frac{N-3}{N[(N-2)+i\sqrt{N(N-2)}]} \Bigg( 1,\ \frac{1}{2} \left( \sqrt{N-2} - i\sqrt{N} \right),\ -1, \\
		&\quad\quad\quad \frac{-1}{2} \left( \sqrt{N-2} - i\sqrt{N} \right),\ \frac{-1}{2} \left( \sqrt{N-2} + i\sqrt{N} \right),\ \frac{1}{2} \left( \sqrt{N-2} + i\sqrt{N} \right),\ 0
	\Bigg)^\intercal \!\!\! , & e^{-i\theta}
\end{alignat*}
\newline % to force new page
\end{widetext}
where
\[ \cos\theta = \frac{-1}{N-1}, \quad \sin\theta = \frac{-\sqrt{N(N-2)}}{N-1}. \]
Note that $\theta \in (\pi,3\pi/2)$. Alternatively, if it were defined in $(0,\pi/2)$, then the last four eigenvalues would have explicit minus signs.

It is straightforward to verify that the initial state $\ket{ab} = ( 1, 0, 0, 0, 0, 0, 0 )^\intercal$ is the sum of these unnormalized eigenvectors, i.e.,
\[ \ket{ab} = \psi_1 + \psi_1' + \psi_{-1} + \psi_+ + \psi_+' + \psi_- + \psi_-'. \]
Then the state of the system after $t$ applications of $U$ is
\begin{align*}
	U^t \ket{ab} 
	&= \psi_1 + \psi_1' + (-1)^t \psi_{-1} + e^{i\theta t} \left( \psi_+ + \psi_+' \right) \\
	& \quad + e^{-i\theta t} \left( \psi_- + \psi_-' \right).
\end{align*}
We can work out the amplitude of this in $\ket{ab}$ and $\ket{ba}$. Beginning with $\ket{ab}$:
\begin{align}
	\braket{ ab | U^t | ab } 
		&= \frac{1}{N} + \frac{(N-2)(N-3)}{2N(N-1)} + (-1)^t \frac{N-3}{2(N-1)} \notag \\
		& \quad + e^{i\theta t} \frac{1}{N} + e^{-i\theta t} \frac{1}{N} \notag \\
		&= \frac{N^2 - 3N + 4}{2N(N-1)} + (-1)^t \frac{N-3}{2(N-1)} \notag \\
		& \quad + \frac{2}{N} \cos(\theta t) \notag \\
		&= \begin{cases}
			\displaystyle \frac{N-2}{N} + \frac{2}{N} \cos(\theta t), & t \text{ even} \\
			\displaystyle \frac{2}{N(N-1)} + \frac{2}{N} \cos(\theta t), & t \text{ odd} \\
		\end{cases} \label{eq:complete_ab}.
\end{align}
Now for $\ket{ba}$:
\begin{align}
	\braket{ ba | U^t | ab } 
		&= -\frac{N-3}{N(N-1)} - \frac{(N-2)(N-3)}{2N(N-1)} \notag \\
		& \quad + (-1)^t \frac{N-3}{2(N-1)} + 0 + 0 \notag \\
		&= \begin{cases}
			\displaystyle 0, & t \text{ even} \\
			\displaystyle -\frac{N-3}{N-1}, & t \text{ odd} \\
		\end{cases} \label{eq:complete_ba}.
\end{align}
Using these formulas with $N = 16$, we get exactly the amplitudes in $\ket{ab}$ and $\ket{ba}$ in Table~\ref{table:complete_prob_time_ab_ba_N16}. Furthermore, for large $N$, we get that the amplitude in $\ket{ab}$ roughly alternates between $1$ and $0$, while the amplitude in $-| ba \rangle$ roughly alternates between $0$ and $1$. So for large $N$, the system alternates between being in $\ket{ab}$ and $\ket{ba}$ with probability nearly $1$. This is an example of oscillatory localization, and it persists for all time.

%-------------------------------------------------------------------------------
% Section
%-------------------------------------------------------------------------------

\section{\label{sec:exact} Exact Oscillatory States: 1-Eigenvectors of $U^2$}

While the above example of oscillatory localization on the complete graph was simple enough to be exactly analyzed, for general graphs we expect such analysis to be intractable. Here we begin to develop a more general theory for determining when oscillatory localization occurs.

To start, we observe that oscillatory localization implies that the state of the system returns to itself after two applications of the quantum walk $U$ \eqref{eq:U}. In other words, states that \emph{exactly} oscillate are eigenvectors of $U^2$ with eigenvalue $1$. In this section, we find these eigenstates, assuming that the graph $G$ is $d$-regular, connected, and undirected. We show that there are only two distinct types of $1$-eigenvectors of $U^2$, which we call uniform states and flip states, and that any $1$-eigenvector of $U^2$ can be expressed in terms of them.

In general, although the $1$-eigenvectors of $U^2$ oscillate between two states, they may not be localized. We disregard this for the time being, finding oscillatory states regardless of their spatial distributions. Later in the paper, we project localized initial states, such as $\ket{ab}$ from the complete graph in the last section, onto these general oscillatory states and prove oscillatory localization on a large variety of graphs.

%-------------------------------------------------------------------------------

\subsection{Uniform States}

The first type of eigenvector of $U^2$ with eigenvalue $1$ is a \emph{uniform state}. To define it, we start by introducing some notation. 

\begin{definition}
	Let $V$ be the vertex set of $G$. For a vertex subset $T \subseteq V$, define the outer product
	\[ \ket{\sigma_T} = \ket{s_T} \otimes \ket{s_c}, \]
	where
	\[ \ket{s_T} = \frac{1}{\sqrt{|T|}}\sum_{t \in T} \ket{t} \quad \text{and} \quad \ket{s_c} = \frac{1}{\sqrt{d}} \sum_{i=1}^d \ket{i} \]
	are equal superpositions over the vertices and directions, respectively. Evaluating the tensor product,
	\[ \ket{\sigma_T} = \frac{1}{\sqrt{d|T|}}\sum_{\substack{t \in T \\ v \sim t}} \ket{tv}, \]
	where $\ket{tv}$ denotes the quantum walker at vertex $t$, pointing towards vertex $v$.
\end{definition}

\noindent Using this notation $\ket{\sigma_T}$, we define the uniform states of $G$, depending on whether $G$ is non-bipartite or bipartite:

\begin{definition}
	The \emph{uniform states of $G$} are
	\begin{itemize}
		\item	$\ket{\ustate_V}$ if $G$ is non-bipartite (i.e., the uniform superposition over all the vertices and directions).
		\item	$\ket{\ustate_X}$, $\ket{\ustate_Y}$, and their linear combinations, if $G$ is bipartite, where $X$ and $Y$ are the partite sets of $G$ (i.e., the uniform superpositions over each partite set and all directions, and linear combinations of them).
	\end{itemize}
\end{definition}

\noindent Now we give a simple proof showing that uniform states, defined above, are indeed $1$-eigenvectors of $U^2$.

\begin{lemma}
	The uniform states are eigenvectors of $U^2$ with eigenvalue 1.
\end{lemma}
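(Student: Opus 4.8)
The plan is to exploit the fact that the Grover coin fixes the coin-space uniform superposition, which collapses the two-step evolution to a purely combinatorial statement about the flip-flop shift. First I would note that $C\ket{s_c} = \ket{s_c}$, since $C = 2\ketbra{s_c}{s_c} - I_d$ and $\ket{s_c}$ is normalized; because every uniform state has the product form $\ket{s_T}\otimes\ket{s_c}$, it follows that $(I_N \otimes C)\ket{\sigma_T} = \ket{\sigma_T}$, so that $U\ket{\sigma_T} = S\ket{\sigma_T}$. Thus it suffices to track the flip-flop shift alone, which acts by $S\ket{tv} = \ket{vt}$.

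For the non-bipartite uniform state $\ket{\sigma_V}$ I would show directly that $S\ket{\sigma_V} = \ket{\sigma_V}$. Writing $\ket{\sigma_V} \propto \sum_{t,\,v\sim t}\ket{tv}$, the shift sends this to $\sum_{t,\,v\sim t}\ket{vt}$; since $G$ is undirected, the map $(t,v)\mapsto(v,t)$ is a bijection of the directed-edge set onto itself, so relabeling the summation recovers the original sum. Hence $U\ket{\sigma_V} = \ket{\sigma_V}$ is already a $1$-eigenvector of $U$, and a fortiori of $U^2$.

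For the bipartite case I would instead show that $U$ swaps $\ket{\sigma_X}$ and $\ket{\sigma_Y}$, so that each is fixed by $U^2$. Applying $S$ to $\ket{\sigma_X} \propto \sum_{t\in X,\,v\sim t}\ket{tv}$ yields $\sum_{t\in X,\,v\sim t}\ket{vt}$, where every such $v$ lies in $Y$; grouping by the new position vertex, the key point is that each $v\in Y$ has all $d$ of its neighbors inside $X$ (by bipartiteness), so the amplitude deposited at $v$ fills every coin direction equally and reconstitutes $\ket{s_c}$. Together with the normalization identity $|X| = |Y|$ — which follows from double-counting edges in a $d$-regular bipartite graph — this gives $S\ket{\sigma_X} = \ket{\sigma_Y}$, and symmetrically $S\ket{\sigma_Y} = \ket{\sigma_X}$. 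Therefore $U^2\ket{\sigma_X} = \ket{\sigma_X}$ and $U^2\ket{\sigma_Y} = \ket{\sigma_Y}$, and any linear combination of them is a $1$-eigenvector of $U^2$ by linearity.

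The computations are short, so I expect the only real care to be in the bookkeeping of the shift step. The crux is recognizing that the coin returns the state to the uniform-over-directions form precisely when the shift deposits amplitude along all $d$ directions at each occupied vertex — which is exactly what happens for $T = V$ (every neighbor is counted) and for $T = X$ with target set $Y$ (bipartiteness forces all neighbors of a $Y$-vertex into $X$), but would fail for a generic subset $T$. This is the structural reason the uniform states take precisely the two stated forms, and the main obstacle is simply stating the relabeling and regrouping of the directed-edge sums cleanly rather than any genuine difficulty.
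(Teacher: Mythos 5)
Your proposal is correct and follows essentially the same route as the paper's proof: observe that the Grover coin fixes $\ket{s_c}$ so that $U$ reduces to the shift on any $\ket{\sigma_T}$, then check that $S$ fixes $\ket{\sigma_V}$ (by relabeling the directed-edge sum) and swaps $\ket{\sigma_X}$ with $\ket{\sigma_Y}$ in the bipartite case. Your added remarks on why the coin-uniform form is reconstituted only for $T=V$ or the partite sets, and on $|X|=|Y|$ from regularity, are correct elaborations of details the paper leaves implicit.
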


\begin{proof}
	For an arbitrary $T \subseteq V$, we have
	\begin{align*}
		U \ket{\ustate_T} 
			&= S (I_N \otimes C) (\ket{s_T} \otimes \ket{s_c}) = S (\ket{s_T} \otimes C\ket{s_c}) \\
			&= S (\ket{s_T} \otimes \ket{s_c}).
	\end{align*}
	Let us consider separately the cases when $G$ is non-bipartite or bipartite.
	
	When $G$ is non-bipartite, consider $\ket{\ustate_V}$. We have
	\begin{align*}
		U \ket{\ustate_V}
		&= S (\ket{s_V} \otimes \ket{s_c}) \\
		&= S\left(\frac{1}{2\sqrt{dN}}\sum_{\substack{v \in V \\ u \sim v}}(\ket{uv} + \ket{vu})\right) \\
		&= \frac{1}{2\sqrt{dN}}\sum_{\substack{v \in V \\ u \sim v}} (\ket{vu} + \ket{uv}) \\
		&= \ket{\ustate_V}.
	\end{align*}
	Therefore, $\ket{s_V}$ is an eigenvector of $U$ with eigenvalue 1, and hence it is also an eigenvector of $U^2$ with eigenvalue 1.

	Now suppose $G$ is bipartite. For $\ket{\ustate_X}$, we have 
	\begin{align*}
		U\ket{\ustate_X}
		&= S (\ket{s_X} \otimes \ket{s_c}) \\
		&= S\left(\frac{1}{\sqrt{dN/2}}\sum_{\substack{x \in X \\ y \sim x}}\ket{xy}\right) \\
		&= \frac{1}{\sqrt{dN/2}}\sum_{\substack{y \in Y \\ x \sim y}}\ket{yx} \\
		&= \ket{\ustate_Y}.
	\end{align*}
	Thus in two steps, $\ket{\ustate_X}$ evolves to $\ket{\ustate_Y}$ and then back to $\ket{\ustate_X}$. Therefore, $\ket{\ustate_X}$ is an eigenvector of $U^2$ with eigenvalue 1. The same holds for $\ket{\ustate_Y}$.
\end{proof}

%-------------------------------------------------------------------------------

\subsection{Flip States}

The second type of eigenvector of $U^2$ with eigenvalue $1$ is called a \emph{flip state}. To define it, we first introduce the average amplitude of the edges pointing out from a vertex and into a vertex. Note that each undirected edge $\{u, v\}$ actually consists of two amplitudes: one for the walker at $u$ pointing towards $v$, and one for the walker at $v$ pointing towards $u$. So we can formulate each undirected edge as two directed edges $(u, v)$ and $(v, u)$ that are described by the states $\ket{uv}$ and $\ket{vu}$ \cite{Hillery2003}. Then we have the following definition:

\begin{definition}
	For a state $\ket{\psi}$, define the \emph{average outgoing and incoming amplitudes} at a vertex $u$ as
	\[ \avgout{u}{\psi} = \frac{1}{d} \sum_{v \sim u} \braket{uv | \psi} \quad \text{and} \quad \avgin{u}{\psi} = \frac{1}{d} \sum_{v \sim u} \braket{vu | \psi}. \]
\end{definition}

\noindent These averages are important because the Grover coin $C$ performs the ``inversion about the average'' of Grover's algorithm \cite{Grover1996}, as explained in the following lemma:

\begin{lemma}
	\label{lemma:Grover}
	Consider a general coin state
	\[ \ket{\psi_c} = \sum_{i=1}^d \alpha_i \ket{i}. \]
	The Grover coin $C = 2 \ketbra{s_c}{s_c} - I_d$ inverts each amplitude $\alpha_i$ about the average amplitude $\overline{\psi_c} = \frac{1}{d} \sum_{j=1}^d \alpha_j$, i.e.,
	\[ C \ket{\psi_c} = \sum_{i=1}^d ( 2 \overline{\psi_c} - \alpha_i ) \ket{i}. \]
\end{lemma}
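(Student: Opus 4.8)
The plan is to prove this by direct substitution, since $C$ acts on $\ket{\psi_c}$ entirely through the rank-one projector onto $\ket{s_c}$. First I would expand
\[ C\ket{\psi_c} = \left(2\ketbra{s_c}{s_c} - I_d\right)\ket{\psi_c} = 2\ket{s_c}\braket{s_c | \psi_c} - \ket{\psi_c}, \]
so that the entire content of the lemma is encoded in the single scalar overlap $\braket{s_c | \psi_c}$, together with the trivial action of $I_d$.

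The key step is to evaluate this overlap and recognize that it is, up to normalization, the average amplitude. Using $\ket{s_c} = \frac{1}{\sqrt{d}}\sum_{i=1}^d \ket{i}$ and the orthonormality of the $\ket{i}$, I would compute $\braket{s_c | \psi_c} = \frac{1}{\sqrt{d}}\sum_{j=1}^d \alpha_j = \sqrt{d}\,\overline{\psi_c}$. Substituting this back and expanding $\ket{s_c}$ once more, the term $2\ket{s_c}\braket{s_c|\psi_c}$ becomes $2\sqrt{d}\,\overline{\psi_c}\cdot\frac{1}{\sqrt{d}}\sum_{i=1}^d\ket{i} = 2\overline{\psi_c}\sum_{i=1}^d\ket{i}$, where the two factors of $\sqrt{d}$ cancel.

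Finally I would collect terms componentwise. Writing $\ket{\psi_c} = \sum_{i=1}^d \alpha_i\ket{i}$ and subtracting it from the expression above yields $C\ket{\psi_c} = \sum_{i=1}^d(2\overline{\psi_c} - \alpha_i)\ket{i}$, which is exactly the claimed inversion of each amplitude about the average. There is no substantive obstacle in this lemma: the only thing to watch is the bookkeeping of the $\sqrt{d}$ normalization, whose clean cancellation is precisely what makes $\overline{\psi_c}$ the genuine arithmetic mean rather than a rescaled quantity, and the argument requires no assumption on whether the $\alpha_i$ are real or complex.
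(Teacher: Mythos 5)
Your proposal is correct and follows essentially the same route as the paper's proof: expand the rank-one projector, evaluate $\braket{s_c|\psi_c}=\sqrt{d}\,\overline{\psi_c}$, cancel the normalization factors, and collect amplitudes componentwise. Nothing is missing.
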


\noindent Additionally note that if the average amplitude is zero (i.e., $\overline{\psi_c} = 0$), then $C \ket{\psi_c} = -\ket{\psi_c}$.

\begin{proof}
	\begin{align*}
		C \ket{\psi_c}
			&= \left( 2 \ketbra{s_c}{s_c} - I_d \right) \ket{\psi_c} \\
			&= 2 \ket{s_c} \braket{s_c | \psi_c} - \ket{\psi_c} \\
			&= 2 \cdot \frac{1}{\sqrt{d}} \sum_{i=1}^d \ket{i} \frac{1}{\sqrt{d}} \sum_{j=1}^d \alpha_j - \ket{\psi_c} \\
			&= 2 \left( \frac{1}{d} \sum_{j=1}^d \alpha_j \right) \sum_{i=1}^d \ket{i}  - \ket{\psi_c} \\
			&= 2 \overline{\psi_c} \sum_{i=1}^d \ket{i}  - \ket{\psi_c} \\
			&= \sum_{i=1}^d ( 2 \overline{\psi_c} - \alpha_i ) \ket{i}.
	\end{align*}
\end{proof}

\noindent Now a flip state can be defined in terms of these average outgoing and incoming amplitudes:

\begin{definition}
	\label{def:flip-state}
	We say that a state $\ket{\fstate}$ is a \emph{flip state} if for each vertex $u$ we have
	\begin{equation}
		\label{eq:zero-sum}
		\avgout{u}{\psi} = 0 \quad \text{and} \quad \avgin{u}{\psi} = 0.
	\end{equation}
	In other words, the sum of the amplitudes of the edges starting at $u$ is $0$, and the sum of the amplitudes of the edges ending at $u$ is $0$, for every vertex in the graph.
\end{definition}

\noindent To explain why this is called a flip state, we introduce the concept of negating and flipping the amplitudes between each pair of vertices:

\begin{definition}
	For a state $\ket{\psi}$, define the \emph{flipped state} $\ket{\flip{\psi}}$ to be such that
	\[ \braket{uv | \flip{\psi}} = -\braket{vu | \psi} \]
	for any edge $(u, v)$.
\end{definition}

\noindent A short lemma and proof shows that flip states are equal to their flipped versions under action by $U$, hence their name:

\begin{lemma}
	\label{thm:flip-step}
	Let $\ket{\fstate}$ be a flip state. Then $U\ket{\fstate}=\ket{\flip{\fstate}}$.
\end{lemma}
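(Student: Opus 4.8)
The plan is to compute the action of $U = S\cdot(I_N\otimes C)$ on $\ket{\fstate}$ amplitude-by-amplitude, tracking each directed-edge component $\braket{uv|\fstate}$, and to show that the result matches the defining relation $\braket{uv|\flip{\fstate}} = -\braket{vu|\fstate}$ of the flipped state. Since the directed-edge states $\{\ket{uv}\}$ form an orthonormal basis of the whole Hilbert space, verifying equality of all of these inner products is enough to conclude $U\ket{\fstate} = \ket{\flip{\fstate}}$.

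First I would apply the coin $I_N\otimes C$. This coin acts independently within the $d$-dimensional coin space attached to each vertex $u$, whose components are precisely the outgoing amplitudes $\braket{uv|\fstate}$ for $v\sim u$. The average of these $d$ components is exactly $\avgout{u}{\fstate}$, which vanishes by the flip-state condition \eqref{eq:zero-sum}. Hence the remark following Lemma~\ref{lemma:Grover} applies: when the average amplitude is zero, the Grover coin negates every component of that vertex's coin block. This gives $\braket{uv|(I_N\otimes C)\fstate} = -\braket{uv|\fstate}$ for every directed edge $(u,v)$.

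Next I would apply the flip-flop shift $S$, which swaps each pair of oppositely oriented directed edges, $S\ket{uv} = \ket{vu}$, and is a self-inverse unitary, so that $\braket{uv|S\chi} = \braket{vu|\chi}$ for any state $\chi$. Taking $\chi = (I_N\otimes C)\fstate$ and using the previous step yields $\braket{uv|U\fstate} = \braket{vu|(I_N\otimes C)\fstate} = -\braket{vu|\fstate}$, where the sign comes from applying the negation identity at the vertex $v$ (whose outgoing average $\avgout{v}{\fstate}$ also vanishes). This is exactly $\braket{uv|\flip{\fstate}}$, which closes the argument.

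I expect no serious obstacle here; the only real subtlety is the bookkeeping of which vertex's coin block governs which amplitude, in particular that the relevant negation in the final step is controlled by the outgoing average at $v$ rather than at $u$. The key conceptual step is recognizing that the flip-state condition $\avgout{u}{\fstate}=0$ is precisely the hypothesis that collapses the Grover coin to a global sign flip via Lemma~\ref{lemma:Grover}; once that is in hand, the shift merely reverses edge orientation and supplies the factor $-\braket{vu|\fstate}$ matching the flipped state's definition. Note that only the outgoing part of \eqref{eq:zero-sum} is actually needed for this lemma; the incoming condition will become relevant when iterating $U$ to establish invariance under $U^2$.
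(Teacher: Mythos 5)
Your proof is correct and follows essentially the same route as the paper's: expand in the directed-edge basis, use Lemma~\ref{lemma:Grover} together with $\avgout{u}{\fstate}=0$ to reduce the Grover coin to a sign flip, then let the flip-flop shift reverse the edge orientation to recover $\braket{uv|\flip{\fstate}}=-\braket{vu|\fstate}$. The only cosmetic difference is that you pull back through $S^\dagger=S$ to compute each inner product $\braket{uv|U\fstate}$ (so the negation is governed by $\avgout{v}{\fstate}$), whereas the paper pushes the expansion of the state forward; your closing observation that only the outgoing half of \eqref{eq:zero-sum} is used also matches the paper's argument.
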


\begin{proof}
	We have
	\begin{align*}
		U\ket{\fstate}
		&= S (I_N \otimes C) \ket{\fstate} \\
		&= S (I_N \otimes C) \sum_{(u,v)} \braket{uv | \fstate} \ket{uv} \\
		&= S \left[ \sum_{(u,v)} \left( 2\avgout{u}{\fstate} - \braket{uv | \fstate} \right) \ket{uv} \right] \\
		&= S \left[  \sum_{(u,v)} -\braket{uv | \fstate}  \ket{uv} \right] \\
		&= \sum_{(u,v)} - \braket{uv | \fstate} \ket{vu} \\
		&= \ket{\flip{\fstate}}.
	\end{align*}
	Note that the sum is over all directed edges $(u,v)$, and the third line is obtained from Lemma \ref{lemma:Grover} since the Grover coin inverts about the average. The fourth line is due to $\ket{\fstate}$ being a flip state, so $\avgout{u}{\fstate} = 0$.
\end{proof}

\noindent Note that the flipped version of a flip state is also a flip state:

\begin{lemma}
	\label{thm:flip-flip}
	Let $\ket{\fstate}$ be a flip state. Then $\ket{\flip{\fstate}}$ is also a flip state.
\end{lemma}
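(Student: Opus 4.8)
The plan is to verify directly that $\ket{\flip{\fstate}}$ satisfies the two defining conditions \eqref{eq:zero-sum} of Definition~\ref{def:flip-state}, namely that $\avgout{u}{\flip{\fstate}} = 0$ and $\avgin{u}{\flip{\fstate}} = 0$ for every vertex $u$. The entire argument rests on a single observation: the flip operation interchanges the outgoing and incoming edges at each vertex (up to an overall sign), so it simply swaps the roles of the two average amplitudes while preserving the property of their vanishing.

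First I would compute the average outgoing amplitude of the flipped state. Starting from the definition $\avgout{u}{\flip{\fstate}} = \frac{1}{d}\sum_{v \sim u} \braket{uv | \flip{\fstate}}$ and substituting $\braket{uv | \flip{\fstate}} = -\braket{vu | \fstate}$ from the definition of the flipped state, the sum over neighbors $v \sim u$ becomes $-\frac{1}{d}\sum_{v \sim u} \braket{vu | \fstate} = -\avgin{u}{\fstate}$. Since $\ket{\fstate}$ is a flip state we have $\avgin{u}{\fstate} = 0$, and hence $\avgout{u}{\flip{\fstate}} = 0$.

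Symmetrically, I would compute the average incoming amplitude $\avgin{u}{\flip{\fstate}} = \frac{1}{d}\sum_{v \sim u} \braket{vu | \flip{\fstate}}$. Applying the flip relation to the ordered pair $(v,u)$ gives $\braket{vu | \flip{\fstate}} = -\braket{uv | \fstate}$, so the sum collapses to $-\avgout{u}{\fstate}$, which vanishes because $\ket{\fstate}$ is a flip state. Having both averages equal to zero at every vertex is precisely the defining condition of a flip state, so $\ket{\flip{\fstate}}$ is a flip state, completing the proof.

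There is no substantive obstacle here; the only point requiring care is the bookkeeping of which directed-edge label appears in the flip relation. One must apply $\braket{uv | \flip{\fstate}} = -\braket{vu | \fstate}$ to the correct ordered pair in each sum, using $(u,v)$ when computing the outgoing average and $(v,u)$ when computing the incoming average, so that the flip correctly converts an outgoing sum of $\fstate$ into an incoming one and vice versa.
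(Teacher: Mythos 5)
Your proof is correct and follows exactly the same route as the paper's one-line argument: the flip relation gives $\avgout{u}{\flip{\fstate}} = -\avgin{u}{\fstate} = 0$ and $\avgin{u}{\flip{\fstate}} = -\avgout{u}{\fstate} = 0$, which is precisely what the paper writes. You have simply spelled out the bookkeeping of the directed-edge labels in more detail.
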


\begin{proof}
	For any vertex $u$, $\avgout{u}{\flip{\fstate}}=-\avgin{u}{\fstate}=0$ and $\avgin{u}{\flip{\fstate}}=-\avgout{u}{\fstate} =0$.
\end{proof}

\noindent Now it is straightforward to prove that flip states are $1$-eigenvectors of $U^2$, so they oscillate between two states.

\begin{lemma}
	Any flip state $\ket{\fstate}$ is an eigenvector of $U^2$ with eigenvalue 1.
\end{lemma}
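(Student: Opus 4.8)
The plan is to compute $U^2\ket{\fstate}$ by applying $U$ twice and chaining the two preceding lemmas, which have been set up precisely for this purpose. First I would apply Lemma \ref{thm:flip-step}: since $\ket{\fstate}$ is a flip state, one step of the walk gives $U\ket{\fstate} = \ket{\flip{\fstate}}$. The crux is then to understand what a second application of $U$ does to $\ket{\flip{\fstate}}$, and here Lemma \ref{thm:flip-flip} does the work, telling me that $\ket{\flip{\fstate}}$ is itself a flip state.

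Because $\ket{\flip{\fstate}}$ is a flip state, I can invoke Lemma \ref{thm:flip-step} a second time, now with $\ket{\flip{\fstate}}$ playing the role of the flip state, to obtain $U\ket{\flip{\fstate}} = \ket{\flip{\flip{\fstate}}}$. Stringing these together yields
\[ U^2 \ket{\fstate} = U \ket{\flip{\fstate}} = \ket{\flip{\flip{\fstate}}}. \]
It therefore remains only to verify that flipping twice is the identity, i.e.\ that $\ket{\flip{\flip{\fstate}}} = \ket{\fstate}$.

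This last identity follows directly from the definition of the flipped state by checking it edge by edge: for any directed edge $(u,v)$ we have $\braket{uv | \flip{\flip{\fstate}}} = -\braket{vu | \flip{\fstate}} = -\left(-\braket{uv | \fstate}\right) = \braket{uv | \fstate}$, where the two negations cancel. Since the amplitudes agree on every directed edge, the states are equal, and hence $U^2\ket{\fstate} = \ket{\fstate}$.

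I do not expect any genuine obstacle here, as the heavy lifting was already done in Lemmas \ref{thm:flip-step} and \ref{thm:flip-flip}; the only point requiring care is the bookkeeping of the two sign flips in the double-flip computation, ensuring they cancel to recover the original state rather than introducing an unwanted global phase of $-1$.
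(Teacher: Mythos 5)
Your proof is correct and follows the same route as the paper: chain Lemma \ref{thm:flip-step}, then Lemma \ref{thm:flip-flip}, then Lemma \ref{thm:flip-step} again. The only difference is presentational — you explicitly verify that the double flip is the identity (the two sign negations cancelling), a step the paper's one-line proof leaves implicit; this is a reasonable bit of added rigor, not a different argument.
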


\begin{proof}
	Since the flipped state $\ket{\flip{\fstate}}$ is also a flip state by Lemma \ref{thm:flip-flip},
	\[ U^2\ket{\fstate} = U\ket{\flip{\fstate}} = \ket{\fstate} \]
	by Lemma \ref{thm:flip-step}. Therefore $\ket{\fstate}$ is an eigenvector of $U^2$ with eigenvalue 1.
\end{proof}

%-------------------------------------------------------------------------------

\subsection{Expansion of Eigenvectors}

In the last two sections, we defined uniform states and flip states, both of which are eigenvectors of $U^2$ with eigenvalue $1$. In this section, we work towards a theorem that proves that all $1$-eigenvectors of $U^2$ can be written as linear combinations of uniform states and/or flip states. That is, uniform states and flip states form a complete basis for states exhibiting exact oscillations between two states. Towards this goal, we first prove some general properties of the $1$-eigenvectors of $U^2$:

\begin{lemma}
	\label{thm:avg-step}
	Let $\ket{\psi}$ be an eigenvector of $U^2$ with eigenvalue 1, and denote $\ket{\psi'} = U\ket{\psi}$. Suppose $u$ and $v$ are connected by an edge in $G$. Then
	\[ \avgout{u}{\psi} = \avgout{v}{\psi'}. \]
\end{lemma}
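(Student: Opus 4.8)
The plan is to establish a single amplitude-update rule for $U$ and then apply it twice, exploiting $U^2\ket{\psi}=\ket{\psi}$ to run it in both directions along the same edge. First I would record how $U$ acts on an individual directed-edge amplitude. Writing $U=S(I_N\otimes C)$ and applying Lemma~\ref{lemma:Grover}, the Grover coin replaces the amplitude $\braket{xy|\phi}$ by $2\avgout{x}{\phi}-\braket{xy|\phi}$ at each vertex $x$ (inversion about the outgoing average there), and the flip-flop shift then sends $\ket{xy}\mapsto\ket{yx}$. Tracking which pre-shift edge lands on a given directed edge, this yields, for any state $\ket{\phi}$ with image $\ket{\chi}=U\ket{\phi}$ and any directed edge $(a,b)$, the identity $\braket{ab|\chi}=2\avgout{b}{\phi}-\braket{ba|\phi}$. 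This is exactly the computation in the proof of Lemma~\ref{thm:flip-step}, but retaining the average term rather than setting it to zero.

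Next I would apply this rule along the two orientations of the edge $\{u,v\}$. Taking $\ket{\phi}=\ket{\psi}$, $\ket{\chi}=\ket{\psi'}$ and the directed edge $(v,u)$ gives $\braket{vu|\psi'}=2\avgout{u}{\psi}-\braket{uv|\psi}$. Because $\ket{\psi}$ is a $1$-eigenvector of $U^2$, we have $U\ket{\psi'}=U^2\ket{\psi}=\ket{\psi}$, so the same rule applies with $\ket{\phi}=\ket{\psi'}$, $\ket{\chi}=\ket{\psi}$; along the directed edge $(u,v)$ this gives $\braket{uv|\psi}=2\avgout{v}{\psi'}-\braket{vu|\psi'}$.

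Finally I would substitute the second identity into the first. The term $\braket{uv|\psi}$ is eliminated, and the shared amplitude $\braket{vu|\psi'}$ appears on both sides and cancels, leaving $0=2\avgout{u}{\psi}-2\avgout{v}{\psi'}$, that is, $\avgout{u}{\psi}=\avgout{v}{\psi'}$, as claimed. The main point to get right—and the only place the eigenvector hypothesis enters—is this bidirectional use of the update rule: the forward relation $\ket{\psi'}=U\ket{\psi}$ alone expresses $\avgout{v}{\psi'}$ as an average of $\avgout{w}{\psi}$ over \emph{all} neighbors $w$ of $v$, which would not single out $u$; it is precisely the backward relation $\ket{\psi}=U\ket{\psi'}$ supplied by $U^2=1$ that provides the second equation needed for the per-edge cancellation. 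I would also double-check the orientation bookkeeping, since the average that surfaces after the flip-flop shift is taken at the head of the image edge, not its tail.
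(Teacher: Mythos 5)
Your proposal is correct and follows essentially the same route as the paper: one application of the coin-then-shift update rule along $(v,u)$ to get $\braket{vu|\psi'}=2\avgout{u}{\psi}-\braket{uv|\psi}$, a second application along $(u,v)$ to $\ket{\psi'}$, and the eigenvector condition $U^2\ket{\psi}=\ket{\psi}$ to force the cancellation $2\avgout{v}{\psi'}-2\avgout{u}{\psi}=0$. The orientation bookkeeping (averaging at the head of the image edge) is exactly as in the paper's proof.
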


\begin{proof}
	Denote $\braket{uv | \psi} = \delta$. Let us examine the effect of two steps of the quantum walk \eqref{eq:U} on this amplitude, as shown in Fig.~\ref{fig:evolution}.

	\begin{figure*}
	\begin{center}
		\subfloat[]{
			\includegraphics{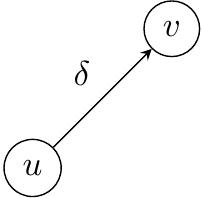}
		} \quad \quad \quad
		\subfloat[]{
			\includegraphics{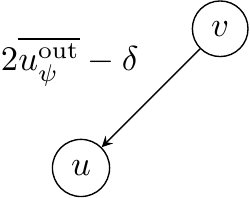}
		} \quad \quad \quad
		\subfloat[]{
			\includegraphics{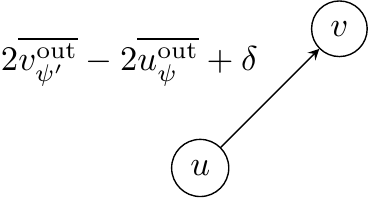}
		}
		\caption{\label{fig:evolution} Evolution of a single amplitude shown in (a), after one application of $U$ in (b), and two applications of $U$ in (c).}
	\end{center}
	\end{figure*}

	After the first step, we have
	\[ \braket{ vu | \psi'} = \braket{ vu | U | \psi } = 2\avgout{u}{\psi} - \delta, \]
	since the Grover coin causes the amplitude to be inverted about the average (Lemma \ref{lemma:Grover}), and then the flip-flop shift causes the edge to switch directions. After the second step, we get
	\begin{align*}
		\braket{ uv | U^2 | \psi } 
			&= \braket{ uv | U | \psi' } = 2\avgout{v}{\psi'} - \braket{ vu | \psi' } \\
			&= 2\avgout{v}{\psi'} - 2\avgout{u}{\psi} + \delta.
	\end{align*}
	Since $\ket{\psi}$ is an eigenvector of $U^2$ with eigenvalue 1, we have $\braket{uv | U^2 | \psi} = \braket{uv | \psi}$.
	Hence $2\avgout{v}{\psi'} - 2\avgout{u}{\psi} = 0$ and $\avgout{v}{\psi'} = \avgout{u}{\psi}$.
\end{proof}

\begin{lemma}
	\label{thm:avg}
	Let $\ket{\psi}$ be an eigenvector of $U^2$ with eigenvalue 1. Suppose $u$ and $v$ are two vertices (not necessarily distinct) of $G$.

	(a) If there exists a walk of even length between $u$ and $v$ in $G$, then
	\[ \avgout{u}{\psi} = \avgout{v}{\psi}. \]

	(b) If there exists a walk of odd length between $u$ and $v$ in $G$, then
	\[ \avgout{u}{\psi} = \avgin{v}{\psi}. \]
\end{lemma}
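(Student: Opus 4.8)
The plan is to induct on the length $k$ of the walk, propagating the average outgoing amplitude one edge at a time via Lemma~\ref{thm:avg-step}. Two preliminary observations make the induction run. First, since $U$ commutes with $U^2$, the state $\ket{\psi'} = U\ket{\psi}$ is again a $1$-eigenvector of $U^2$, and moreover $U\ket{\psi'} = U^2\ket{\psi} = \ket{\psi}$; thus $\ket{\psi}$ and $\ket{\psi'}$ are interchanged by $U$, so Lemma~\ref{thm:avg-step} may be applied to either of them. Second, I would record a swap relation valid for \emph{any} state $\ket{\chi}$, namely $\avgin{v}{U\chi} = \avgout{v}{\chi}$: writing $U = S(I_N \otimes C)$ and using $\bra{wv}S = \bra{vw}$ together with Lemma~\ref{lemma:Grover}, each term satisfies $\braket{wv|U\chi} = 2\avgout{v}{\chi} - \braket{vw|\chi}$, and averaging over $w \sim v$ collapses this to $\avgout{v}{\chi}$. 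Applying it with $\ket{\chi} = \ket{\psi'}$, for which $U\ket{\psi'} = \ket{\psi}$, yields $\avgout{v}{\psi'} = \avgin{v}{\psi}$; this is the bridge that lets part (b) be phrased in terms of $\ket{\psi}$ alone.

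With these in hand I would prove by induction on $k$ the unified claim that, for a walk $u = w_0, w_1, \dots, w_k = v$, the quantity $\avgout{u}{\psi}$ equals $\avgout{v}{\psi}$ when $k$ is even and equals $\avgout{v}{\psi'}$ when $k$ is odd. The base case $k = 0$ is immediate. For the inductive step I strip the last edge $w_k \sim w_{k+1}$: if the length-$k$ prefix is even, the induction hypothesis gives $\avgout{u}{\psi} = \avgout{w_k}{\psi}$, and Lemma~\ref{thm:avg-step} applied to $\ket{\psi}$ carries this along the edge to $\avgout{w_{k+1}}{\psi'}$; if the prefix is odd, the induction hypothesis gives $\avgout{u}{\psi} = \avgout{w_k}{\psi'}$, and Lemma~\ref{thm:avg-step} applied to $\ket{\psi'}$, whose image under $U$ is $\ket{\psi}$, carries this to $\avgout{w_{k+1}}{\psi}$. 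Each edge therefore flips the parity and transfers the average to the next vertex, so after $k+1$ edges the claimed correspondence holds.

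The lemma then falls out directly. Part (a) is the even case of the claim verbatim. For part (b), the odd case gives $\avgout{u}{\psi} = \avgout{v}{\psi'}$, which the swap relation $\avgout{v}{\psi'} = \avgin{v}{\psi}$ converts into $\avgout{u}{\psi} = \avgin{v}{\psi}$, as required. Since a walk can revisit vertices and edges, induction on walk length (rather than on paths) is exactly the right frame, and it also cleanly covers the case $u = v$ and odd closed walks when $G$ has odd cycles.

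I expect the substantive content to lie entirely in the two preliminary observations, with the induction itself being routine parity bookkeeping. The step requiring the most care is the swap relation $\avgin{v}{U\chi} = \avgout{v}{\chi}$, and specifically keeping the two endpoints of each directed edge straight under the flip-flop shift $S$ when invoking Lemma~\ref{lemma:Grover}; getting this right is what makes part (b)'s statement in terms of $\ket{\psi}$ possible, whereas the rest is just repeated single-edge application of Lemma~\ref{thm:avg-step}.
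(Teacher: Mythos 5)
Your proof is correct, and its skeleton is the same as the paper's: propagate the average one edge at a time via Lemma~\ref{thm:avg-step}, using that $\ket{\psi'}=U\ket{\psi}$ is again a $1$-eigenvector of $U^2$ with $U\ket{\psi'}=\ket{\psi}$, and then close by transitivity, which you formalize as an induction on walk length with parity bookkeeping. The one place you genuinely diverge is part (b). The paper obtains $\avgout{u}{\psi}=\avgin{t}{\psi}$ for adjacent $u,t$ by expanding $\avgout{t}{\psi'}=\frac{1}{d}\sum_{w\sim t}\braket{tw|\psi'}$; there the Grover coin acts at the various neighbors $w$ of $t$, producing the averages $\avgout{w}{\psi}$, which only collapse after invoking part (a) to identify them all with $\avgout{u}{\psi}$. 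Your swap relation $\avgin{v}{U\chi}=\avgout{v}{\chi}$ instead expands the \emph{incoming} average at $v$, so the coin in every term acts at the single vertex $v$ and the sum telescopes by Lemma~\ref{lemma:Grover} for an \emph{arbitrary} state $\ket{\chi}$, with no eigenvector hypothesis and no appeal to (a). This buys you a reusable identity and makes part (b) logically independent of part (a), at the cost of one extra preliminary statement; the paper's route is shorter on stated lemmas but couples the two parts. Both arguments are sound, and your care with $\bra{wv}S=\bra{vw}$ in the swap relation is exactly where the care is needed.
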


\begin{proof}
	Denote $\ket{\psi'} = U\ket{\psi}$.

	(a) Suppose $\{u, t\}$ and $\{t, w\}$ are edges of $G$. By Lemma~\ref{thm:avg-step},
	\[ \avgout{u}{\psi} = \avgout{t}{\psi'} = \avgout{w}{\psi}. \]
	Then the statement holds by transitivity.

	(b) Suppose there is an edge $\{u, t\}$. By Lemma~\ref{thm:avg-step},
	\begin{align*}
		\avgout{u}{\psi}
		&= \avgout{t}{\psi'} \\
		&= \frac{1}{d} \sum_{w \sim t} \braket{tw | \psi'} \\
		&= \frac{1}{d} \sum_{w \sim t} \langle tw | S(I_N \otimes C) | \psi \rangle \\
		&= \frac{1}{d} \sum_{w \sim t} \langle wt | (I_N \otimes C) | \psi \rangle \\
		&= \frac{1}{d} \sum_{w \sim t} \left( 2\avgout{w}{\psi} - \braket{wt | \psi} \right) \\
		&= \frac{1}{d} \sum_{w \sim t} \left( 2\avgout{u}{\psi} - \braket{wt | \psi} \right) \quad \text{by (a)} \\
		&= 2\avgout{u}{\psi} - \frac{1}{d} \sum_{w \sim t} \braket{wt | \psi} \\
		&= 2\avgout{u}{\psi} - \avgin{t}{\psi}.
	\end{align*}
	Hence $\avgout{u}{\psi} = \avgin{t}{\psi}$. The statement holds by transitivity once again.
\end{proof}

With these lemmas in place, we are now able to prove the main result of this section, that exact oscillatory states are composed entirely of uniform states and/or flip states.

\begin{theorem}
	\label{thm:expansion}
	Let $\ket{\psi}$ be an eigenvector of $U^2$ with eigenvalue 1.
	Then for some flip state $\ket{\fstate}$,

	(a) if $G$ is non-bipartite,
	\[ \ket{\psi} = \fcoef \ket{\fstate} + \ucoef_V \ket{\ustate_V}, \]

	(b) if $G$ is bipartite,
	\[ \ket{\psi} = \fcoef \ket{\fstate} + \ucoef_X \ket{\ustate_X} + \ucoef_Y \ket{\ustate_Y}. \]
\end{theorem}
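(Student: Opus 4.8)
The plan is to use Lemma~\ref{thm:avg} to show that the average outgoing and incoming amplitudes of $\ket{\psi}$ are not arbitrary functions of the vertex, but are constant on each parity class of the graph, and then to subtract off a suitable linear combination of uniform states so that the remainder has all averages equal to zero — which, by Definition~\ref{def:flip-state}, makes it a flip state. Since the average amplitudes $\avgout{u}{\psi}$ and $\avgin{u}{\psi}$ depend linearly on the state, the remainder $\ket{\psi}$ minus its uniform part will be a flip state precisely when the uniform part reproduces the averages of $\ket{\psi}$ at every vertex, both incoming and outgoing. So the whole argument reduces to solving a small linear system for the coefficients $\ucoef_V$, or $\ucoef_X$ and $\ucoef_Y$, and checking that it is consistent.

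First I would treat the non-bipartite case. Because $G$ is connected and non-bipartite, between any two vertices there exist walks of both parities (any joining walk can be lengthened by detouring around an odd closed walk). Applying Lemma~\ref{thm:avg}(a) along even walks then forces $\avgout{u}{\psi}$ to take a single common value $c$ for every vertex $u$, and applying Lemma~\ref{thm:avg}(b) along an odd closed walk at $u$ gives $\avgout{u}{\psi} = \avgin{u}{\psi}$, so the incoming averages equal the same constant $c$. A direct computation from the definition of $\ket{\ustate_V}$ shows that $\braket{uv | \ustate_V} = 1/\sqrt{dN}$ on every directed edge, hence $\avgout{u}{\ustate_V} = \avgin{u}{\ustate_V} = 1/\sqrt{dN}$ at every vertex. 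Choosing $\ucoef_V = c\sqrt{dN}$ and setting $\ket{\fstate} = \ket{\psi} - \ucoef_V \ket{\ustate_V}$ (absorbing $\fcoef$ into the normalization) then makes every average of $\ket{\fstate}$ vanish, so $\ket{\fstate}$ is a flip state and $\ket{\psi} = \fcoef\ket{\fstate} + \ucoef_V\ket{\ustate_V}$ as claimed.

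For the bipartite case the argument is the same but with two parity classes. Vertices in the same partite set are joined only by even walks and vertices in opposite sets only by odd walks, so Lemma~\ref{thm:avg}(a) makes $\avgout{u}{\psi}$ constant on $X$, say equal to $c_X$, and constant on $Y$, say $c_Y$, while Lemma~\ref{thm:avg}(b) applied across the bipartition gives $\avgin{v}{\psi} = c_X$ for $v \in Y$ and $\avgin{v}{\psi} = c_Y$ for $v \in X$. Computing the averages of the uniform states from their definitions, $\ket{\ustate_X}$ carries amplitude $1/\sqrt{d|X|}$ only on edges directed from $X$ to $Y$, so its outgoing average is $1/\sqrt{d|X|}$ on $X$ and $0$ on $Y$, while its incoming average is $1/\sqrt{d|X|}$ on $Y$ and $0$ on $X$; the analogous statement holds for $\ket{\ustate_Y}$ with the roles of $X$ and $Y$ swapped. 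Setting $\ucoef_X = c_X\sqrt{d|X|}$ and $\ucoef_Y = c_Y\sqrt{d|Y|}$ cancels the outgoing averages of $\ket{\psi}$ on both sets, and the key check is that these same coefficients also cancel the incoming averages: the incoming average of $\ket{\ustate_Y}$ on $X$ equals $c_Y$ and that of $\ket{\ustate_X}$ on $Y$ equals $c_X$, exactly matching the incoming averages of $\ket{\psi}$. Thus $\ket{\fstate} = \ket{\psi} - \ucoef_X\ket{\ustate_X} - \ucoef_Y\ket{\ustate_Y}$ is a flip state.

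I expect the main obstacle to be the graph-theoretic bookkeeping that justifies invoking Lemma~\ref{thm:avg}: one must argue carefully that connectivity together with (non-)bipartiteness supplies walks of the correct parity between every pair of vertices, including the closed walks needed to relate a vertex's outgoing and incoming averages. The second delicate point is the bipartite consistency check — there are four average conditions (outgoing and incoming on each of $X$ and $Y$) but only two free coefficients, so the construction succeeds only because the incoming conditions are forced to agree with the outgoing ones, and verifying this agreement is exactly where the cross-relations from Lemma~\ref{thm:avg}(b) are essential. The remaining steps — evaluating the averages of the uniform states and invoking linearity of the averaging functionals — are routine.
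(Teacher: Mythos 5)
Your proposal is correct and follows essentially the same route as the paper: invoke Lemma~\ref{thm:avg} (via connectivity and parity of walks) to show the averages $\avgout{u}{\psi}$ and $\avgin{u}{\psi}$ are constant on each parity class, then subtract the multiple of the uniform state(s) that reproduces those averages and verify the remainder satisfies Definition~\ref{def:flip-state}. Your coefficients $\ucoef_V = c\sqrt{dN}$ and $\ucoef_X = c_X\sqrt{d|X|}$ coincide with the paper's projections $\braket{\ustate_V|\psi}$ and $\braket{\ustate_X|\psi}$, and your explicit check that the two incoming conditions are automatically satisfied by the same coefficients is exactly the verification the paper carries out.
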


\begin{proof}
	We prove each case separately:

	(a) Let $u$ and $v$ be two arbitrary vertices of $G$. Since $G$ is not bipartite, it contains a cycle of odd length. Therefore there exists both a walk of even length and a walk of odd length between $u$ and $v$, as $G$ is a connected graph. Thus $\avgout{u}{\psi} = \avgout{v}{\psi}$ and $\avgout{u}{\psi} = \avgin{v}{\psi}$ by Lemma \ref{thm:avg}. Therefore $\avgout{v}{\psi} = \avgin{v}{\psi} = \overline{\psi}$ for any vertex $v$.

	Consider the (likely unnormalized) state $\ket{\varphi} = \ket{\psi} - \ket{\ustate_V} \braket{\ustate_V | \psi}$. We prove that, up to normalization, it is a flip state by checking the two conditions of \eqref{eq:zero-sum}. First, note that
	\begin{align}
		\braket{\ustate_V | \psi} \label{eq:B1}
		&= \sum_{u \in V} \sum_{v \sim u} \frac{1}{\sqrt{dN}} \braket{uv | \psi} \\
		&= \sqrt{\frac d N} \sum_{u \in V} \sum_{v \sim u} \frac{1}{d} \braket{uv | \psi} \notag \\
		&= \sqrt{\frac d N} \sum_{u \in V} \avgout{u}{\psi} \notag \\
		&= \sqrt{\frac d N} N \overline{\psi} \notag \\
		&= \sqrt{dN}\,\overline{\psi}. \label{eq:B2}
	\end{align}
	On the other hand, $\braket{uv | \ustate_V} = \frac{1}{\sqrt{dN}}$ for any pair of adjacent vertices $u$ and $v$. Hence,
	\begin{align}
		\avgout{u}{\varphi} \label{eq:A1}
		&= \frac{1}{d} \sum_{v \sim u} \braket{uv | \varphi} \\
		&= \frac{1}{d} \sum_{v \sim u} \left( \braket{uv | \psi} - \braket{uv | \ustate_V} \braket{\ustate_V | \psi} \right) \notag \\
		&= \frac{1}{d} \sum_{v \sim u} \left( \braket{uv | \psi} - \frac{1}{\sqrt{dN}}\sqrt{dN} \, \overline{\psi} \right) \notag \\
		&= \frac{1}{d} \sum_{v \sim u} \left( \braket{uv | \psi} - \overline{\psi} \right) \notag \\
		&= \avgout{u}{\psi} - \overline{\psi} \notag \\
		&= 0. \label{eq:A2}
	\end{align}
	This is the first condition of \eqref{eq:zero-sum}. The second condition is proved similarly to equations \eqref{eq:A1}--\eqref{eq:A2}:
	\begin{align*}
		\avgin{u}{\varphi}
		&= \frac{1}{d} \sum_{v \sim u} (\braket{vu | \psi} - \braket{vu | \ustate_V} \braket{\ustate_V | \psi}) \\
		&= \avgin{u}{\psi} - \overline{\psi} \\
		&= 0.
	\end{align*}
	So $\ket{\varphi}$ is a flip state.

	(b) Let the partite sets be $X$ and $Y$. Let $\overline{\psi_T} = \frac{1}{|T|} \sum_{t \in T} \avgout{t}{\psi}$. Suppose $x \in X$ and $y \in Y$. Then we have the following properties by Lemma \ref{thm:avg}:
	\[ \avgout{x}{\psi} = \avgin{y}{\psi} = \overline{\psi_X} \quad \text{and} \quad \avgout{y}{\psi} = \avgin{x}{\psi} = \overline{\psi_Y}. \]
	Then similarly to (a), we can prove that the state $\ket{\varphi} = \ket{\psi} - \ket{\ustate_X} \braket{\ustate_X | \psi} - \ket{\ustate_Y} \braket{\ustate_Y | \psi}$ is a flip state by checking the two conditions of \eqref{eq:zero-sum}. First, note that similarly to equations \eqref{eq:B1}--\eqref{eq:B2},
	\begin{align*}
		\braket{\ustate_X | \psi}
		&= \sum_{x \in X} \sum_{y \sim X} \frac{1}{\sqrt{dN/2}} \braket{xy | \psi} \\
		&= \sqrt{dN/2}\,\overline{\psi_X}.
	\end{align*}
	On the other hand, $\braket{xy | \ustate_X} = \frac{1}{\sqrt{dN/2}}$ for any edge $(x, y)$. Moreover, $\braket{xy | \ustate_Y} = 0$. Hence,
	\begin{align}
		\avgout{x}{\varphi} \label{eq:C1}
		&= \frac{1}{d} \sum_{y \sim x} \Big( \braket{xy | \psi} - \braket{xy | \ustate_X} \braket{\ustate_X | \psi} \\
		&\quad\quad\quad\quad\quad - \braket{xy | \ustate_Y} \braket{\ustate_Y | \psi} \Big) \notag \\
		&= \frac{1}{d} \sum_{y \sim x} \Bigg( \braket{xy | \psi} - \frac{1}{\sqrt{dN/2}}\sqrt{dN/2} \, \overline{\psi_X} \notag \\
		&\quad\quad\quad\quad\quad - 0\cdot \braket{\ustate_Y | \psi} \Bigg) \notag \\
		&= \frac{1}{d} \sum_{y \sim x} \left(\braket{xy | \psi} - \overline{\psi_X}\right) \notag \\
		&= \avgout{x}{\psi} - \overline{\psi_X} \notag \\
		&= 0. \label{eq:C2}
	\end{align}
	This is the first condition of \eqref{eq:zero-sum}. The second condition comes similarly to equations \eqref{eq:C1}--\eqref{eq:C2},
	\begin{align*}
		\avgin{y}{\varphi}
		&= \frac{1}{d} \sum_{x \sim y} \Big( \braket{xy | \psi} - \braket{xy | \ustate_X} \braket{\ustate_X | \psi} \\
		&\quad\quad\quad\quad\quad - \braket{xy | \ustate_Y} \braket{\ustate_Y | \psi} \Big) \\
		&= \avgin{y}{\psi} - \overline{\psi_X} \\
		&= 0.
	\end{align*}

	Similarly, we prove that $\avgout{y}{\varphi} = 0$ and $\avgin{x}{\varphi} = 0$. So $\ket{\varphi}$ is a flip state.
\end{proof}

We end this section by showing that uniform states and flip states are orthogonal to each other, so they serve as an orthonormal basis  for 1-eigenvectors of $U^2$.

\begin{lemma}
	\label{thm:flip-uniform}
	Any flip state $\ket{\fstate}$ is orthogonal to any state $\ket{\ustate_T}$:
\end{lemma}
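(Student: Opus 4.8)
The plan is to compute the inner product $\braket{\ustate_T | \fstate}$ directly from the definitions and show that it vanishes termwise. Since $\ket{\ustate_T} = \frac{1}{\sqrt{d|T|}}\sum_{t \in T,\, v \sim t} \ket{tv}$ has purely real coefficients, taking the adjoint introduces no conjugation subtlety, so that
\[
	\braket{\ustate_T | \fstate} = \frac{1}{\sqrt{d|T|}} \sum_{t \in T} \sum_{v \sim t} \braket{tv | \fstate}.
\]
The key observation is that the inner sum is exactly $d$ times the average outgoing amplitude at $t$: by the definition of $\avgout{t}{\fstate}$, we have $\sum_{v \sim t} \braket{tv | \fstate} = d\,\avgout{t}{\fstate}$.

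Substituting this in, I would write
\[
	\braket{\ustate_T | \fstate} = \frac{1}{\sqrt{d|T|}} \sum_{t \in T} d\,\avgout{t}{\fstate} = \sqrt{\frac{d}{|T|}} \sum_{t \in T} \avgout{t}{\fstate}.
\]
Because $\ket{\fstate}$ is a flip state, Definition~\ref{def:flip-state} guarantees $\avgout{t}{\fstate} = 0$ for \emph{every} vertex $t$, and in particular for each $t \in T$. Hence every summand is zero and the whole inner product vanishes, giving $\braket{\ustate_T | \fstate} = 0$, as claimed. The argument works identically whether $G$ is bipartite or not, since it only uses the outgoing-amplitude condition at the vertices of $T$.

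There is essentially no obstacle here: the statement follows immediately once the two definitions are unfolded. The only points requiring minor care are matching the directed-edge label $\ket{tv}$ in the expansion of $\ket{\ustate_T}$ to the outgoing average $\avgout{t}{\fstate}$ (rather than the incoming average, which would instead connect to the second flip-state condition), and noting that the normalization constant $\sqrt{d/|T|}$ is irrelevant to the conclusion since the sum itself is zero. This orthogonality, combined with Theorem~\ref{thm:expansion}, is what makes uniform states and flip states a genuine orthonormal basis for the $1$-eigenspace of $U^2$.
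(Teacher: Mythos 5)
Your proof is correct and is essentially identical to the paper's: both expand $\ket{\ustate_T}$ in the edge basis, recognize the inner sum $\sum_{v \sim t} \braket{tv|\fstate}$ as $d\,\avgout{t}{\fstate}$, and conclude from the flip-state condition that every term vanishes. (If anything, computing $\braket{\ustate_T|\fstate}$ rather than $\braket{\fstate|\ustate_T}$ sidesteps a conjugation that the paper glosses over, but this is immaterial since the sum is zero.)
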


\begin{proof}
	\begin{align*}
		\braket{\fstate | \ustate_T}
		&= \sum_{t \in T} \sum_{v \sim t} \braket{\fstate | tv} \braket{tv | \ustate_T} \\
		&= \sum_{t \in T} \frac{1}{\sqrt{d|T|}} \sum_{v \sim t} \braket{\fstate | tv} \\
		&= \sum_{t \in T} \sqrt\frac{d}{|T|} \cdot \avgout{t}{\fstate} \\
		&= 0.
	\end{align*}
\end{proof}

Also, the states $\ket{\ustate_X}$ and $\ket{\ustate_Y}$ are orthogonal because for any edge $\ket{uv}$, only one of them has a non-zero amplitude at this edge. Thus flip states and uniform states form a complete orthogonal basis for the eigenvectors of $U^2$ with eigenvalue 1.

%-------------------------------------------------------------------------------
% Section
%-------------------------------------------------------------------------------

\section{\label{sec:approximate} Approximate Oscillatory States}

In the previous section, we found the states that \emph{exactly} exhibit oscillation between two quantum states, returning to themselves after two applications of $U$. We showed that these states are spanned by uniform states and flip states. In our example of oscillatory localization on the complete graph, however, we showed that the system \emph{approximately} alternated between $\ket{ab}$ and $-| ba \rangle$. So, while these states are not exact $1$-eigenvectors of $U^2$, they are ``close enough'' that oscillatory localization still occurs.

In this section, we give conditions for when a starting state $\ket{\psi_0}$ is ``close enough'' to being a $1$-eigenvector of $U^2$ that it exhibits oscillations. We do this by expanding $\ket{\psi_0}$ as a linear combination of flip states, uniform states, and whatever state remains. If the overlap of $\ket{\psi_0}$ with the flip states and/or uniform states is sufficiently large, then oscillations occur. That is, the state approximately alternates between $\ket{\psi_0}$ at even steps and $U\ket{\psi_0}$ at odd steps.

%-------------------------------------------------------------------------------

\subsection{Estimate on the Oscillations}

The following theorem gives a bound on the extent of oscillations for an arbitrary starting state.

\begin{theorem}
	\label{thm:localization-bound}
	Let $\ket{\psi_0}$ be the starting state of the quantum walk. It can be expressed as
	\[ \ket{\psi_0} = \fcoef \ket{\fstate} + \ucoef \ket{\ustate} + \rcoef \ket{\rstate}, \]
	where
	\begin{itemize}
		\item	$\ket{\fstate}$ is a normalized flip state;
		\item	$\ket{\ustate}$ is a normalized uniform state, equal to $\ket{\ustate_V}$ if $G$ is non-bipartite, or a normalized linear combination of $\ket{\ustate_X}$ and $\ket{\ustate_Y}$ if $G$ is bipartite;
		\item	$\ket{\rstate}$ is some normalized ``remainder'' state orthogonal to $\ket{\fstate}$ and $\ket{\ustate}$.
	\end{itemize}
	Then
	\begin{enumerate}[label=(\alph*)]
		\item	after an even number of steps $2t$,
			\[ \left| \braket{ \psi_0 | U^{2t} | \psi_0 }\right| \geq 2\left(|\fcoef|^2 + |\ucoef|^2\right)-1, \]
		\item	after an odd number of steps $2t+1$,
			\[ \left| \braket{ \flip{\psi_0} | U^{2t+1} | \psi_0 } \right| \geq 2\max\left(|\fcoef|^2,|\ucoef|^2\right)-1. \]
	\end{enumerate}
\end{theorem}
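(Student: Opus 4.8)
The plan is to exploit that $\ket{\fstate}$ and $\ket{\ustate}$ are $1$-eigenvectors of $U^2$, together with the orthonormality of the triple $\{\ket{\fstate}, \ket{\ustate}, \ket{\rstate}\}$. Orthonormality is immediate: $\ket{\fstate} \perp \ket{\ustate}$ by Lemma~\ref{thm:flip-uniform}, $\ket{\rstate}$ is orthogonal to both by hypothesis, and all three are normalized. From $U^2\ket{\fstate} = \ket{\fstate}$ and $U^2\ket{\ustate} = \ket{\ustate}$ we get $U^{2t}\ket{\fstate}=\ket{\fstate}$ and $U^{2t}\ket{\ustate}=\ket{\ustate}$, and taking adjoints yields the ``absorption'' identities $\bra{\fstate}U^{2t} = \bra{\fstate}$ and $\bra{\ustate}U^{2t} = \bra{\ustate}$, which kill most cross terms. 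I will also use that flipping a state is the action of the (unitary, involutive) operator $-S$, since $S\ket{uv}=\ket{vu}$ while flipping sends $\ket{uv}\mapsto -\ket{vu}$; this shows flipping preserves inner products, and, because $(I_N\otimes C)$ fixes every uniform coin state, gives $\ket{\flip{\ustate}} = -S\ket{\ustate} = -U\ket{\ustate}$, in contrast to $\ket{\flip{\fstate}} = U\ket{\fstate}$ from Lemma~\ref{thm:flip-step}.

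For part (a), I would write $U^{2t}\ket{\psi_0} = \fcoef\ket{\fstate} + \ucoef\ket{\ustate} + \rcoef\,U^{2t}\ket{\rstate}$ using the eigenvector property, then expand $\braket{\psi_0 | U^{2t} | \psi_0}$. Every cross term vanishes: the purely off-diagonal ones by orthonormality, and those coupling $\ket{\rstate}$ to $\ket{\fstate}$ or $\ket{\ustate}$ by the absorption identities (e.g.\ $\braket{\fstate | U^{2t} | \rstate} = \braket{\fstate | \rstate} = 0$). What survives is $\braket{\psi_0 | U^{2t} | \psi_0} = |\fcoef|^2 + |\ucoef|^2 + |\rcoef|^2 \braket{\rstate | U^{2t} | \rstate}$. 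Since $U^{2t}$ is unitary, $|\braket{\rstate | U^{2t} | \rstate}| \le 1$, and normalization gives $|\rcoef|^2 = 1 - |\fcoef|^2 - |\ucoef|^2$. The reverse triangle inequality then yields $|\braket{\psi_0 | U^{2t} | \psi_0}| \ge (|\fcoef|^2+|\ucoef|^2) - |\rcoef|^2 = 2(|\fcoef|^2+|\ucoef|^2)-1$.

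Part (b) follows the same pattern with the odd-step evolution $U^{2t+1}\ket{\psi_0} = \fcoef\ket{\flip{\fstate}} + \ucoef\,U\ket{\ustate} + \rcoef\,U^{2t+1}\ket{\rstate}$ paired against $\bra{\flip{\psi_0}} = \fcoef^*\bra{\flip{\fstate}} + \ucoef^*\bra{\flip{\ustate}} + \rcoef^*\bra{\flip{\rstate}}$. The crucial feature is an asymmetry between the two oscillating components: the flip diagonal term is $|\fcoef|^2\braket{\flip{\fstate}|U|\fstate} = |\fcoef|^2\braket{\flip{\fstate}|\flip{\fstate}} = +|\fcoef|^2$, whereas the uniform diagonal term is $|\ucoef|^2\braket{\flip{\ustate}|U|\ustate} = |\ucoef|^2\braket{-U\ustate|U\ustate} = -|\ucoef|^2$. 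All cross terms again vanish by orthonormality (flipping being an isometry) and the absorption identities, leaving $\braket{\flip{\psi_0}|U^{2t+1}|\psi_0} = |\fcoef|^2 - |\ucoef|^2 + |\rcoef|^2\braket{\flip{\rstate}|U^{2t+1}|\rstate}$. The reverse triangle inequality with $|\braket{\flip{\rstate}|U^{2t+1}|\rstate}|\le 1$ then gives the magnitude at least $\bigl||\fcoef|^2-|\ucoef|^2\bigr| - |\rcoef|^2$; using $\bigl||\fcoef|^2-|\ucoef|^2\bigr| = \max(|\fcoef|^2,|\ucoef|^2) - \min(|\fcoef|^2,|\ucoef|^2)$ and $|\rcoef|^2 = 1 - \max(|\fcoef|^2,|\ucoef|^2) - \min(|\fcoef|^2,|\ucoef|^2)$, the $\min$ cancels and this equals $2\max(|\fcoef|^2,|\ucoef|^2) - 1$.

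I expect the main obstacle to be the sign bookkeeping in part (b): one must firmly establish $\ket{\flip{\ustate}} = -U\ket{\ustate}$ (opposite to $\ket{\flip{\fstate}} = +U\ket{\fstate}$) and verify it in the bipartite case, where $\ket{\ustate}$ is a linear combination of $\ket{\ustate_X}$ and $\ket{\ustate_Y}$ and $U$ swaps the two. This opposite sign is exactly what converts the clean sum $|\fcoef|^2+|\ucoef|^2$ of the even case into the difference $|\fcoef|^2-|\ucoef|^2$, forcing the weaker $\max$ bound at odd steps. The rest is routine verification that every remainder cross term genuinely vanishes, for which the absorption identities and the isometry of flipping are precisely the tools required.
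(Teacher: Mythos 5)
Your proposal is correct and follows essentially the same route as the paper's proof: the same orthogonal decomposition, the same use of the $1$-eigenvector property and unitarity to kill cross terms, the same sign asymmetry $\ket{\flip{\fstate}} = U\ket{\fstate}$ versus $\ket{\flip{\ustate}} = -U\ket{\ustate}$ producing the $\max$ bound at odd steps, and the same reverse-triangle-inequality finish. Your observation that flipping is the unitary $-S$ is a slightly slicker justification of the unitarity of the flip map and of $U\ket{\ustate}=-\ket{\flip{\ustate}}$ than the paper's case analysis, but it does not change the argument.
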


\begin{proof}
	We prove each part separately.

	(a) After $2t$ steps, the state of the quantum walk is
	\[ U^{2t}\ket{\psi_0} = \fcoef \ket{\fstate} + \ucoef \ket{\ustate} + \rcoef U^{2t}\ket{\rstate}. \]
	Since unitary operators preserve the inner product between vectors, we have $\braket{ \fstate | U^{2t} | \rstate } = \braket{ \ustate | U^{2t} | \rstate } =0$.
	Therefore,
	\[ \braket{ \psi_0 | U^{2t} | \psi_0 } = |\fcoef|^2 + |\ucoef|^2 + |\rcoef|^2 \braket{ \rstate | U^{2t} | \rstate }. \]
	Then we have
	\begin{align*}
		\left|\braket{ \psi_0 | U^{2t} | \psi_0 }\right|
		&\geq |\fcoef|^2 + |\ucoef|^2 - |\rcoef|^2 \\
		&= |\fcoef|^2 + |\ucoef|^2- (1-|\fcoef|^2 - |\ucoef|^2)\\
		&= 2\left(|\fcoef|^2 + |\ucoef|^2\right)-1.
	\end{align*}

	(b) After $2t+1$ steps the state of the quantum walk is
	\begin{align*}
		U^{2t+1}\ket{\psi_0}
		&= \fcoef U^{2t+1}\ket{\fstate} + \ucoef U^{2t+1}\ket{\ustate} + \rcoef U^{2t+1}\ket{\rstate} \\
		&= \fcoef \ket{\flip{\fstate}} + \ucoef U\ket{\ustate} + \rcoef U^{2t+1}\ket{\rstate}.
	\end{align*}
	For a non-bipartite graph,
	\[ U\ket{\ustate} = U\ket{\ustate_V} = \ket{\ustate_V} = -\ket{\flip{\ustate_V}} = -\ket{\flip{\ustate}}. \]
	For a bipartite graph,
	\begin{align*}
		U\ket{\ustate}
			&= U(p\ket{\ustate_X}+q\ket{\ustate_Y}) = p\ket{\ustate_Y} + q\ket{\ustate_X} \\
			&= -p\ket{\flip{\ustate_X}}-q\ket{\flip{\ustate_Y}} = -\ket{\flip{\ustate}}.
	\end{align*}
	Hence, in either case,
	\[ U^{2t+1}\ket{\psi_0} = \fcoef \ket{\flip{\fstate}} - \ucoef \ket{\flip{\ustate}} + \rcoef U^{2t+1}\ket{\rstate}. \]
	The flipped starting state is
	\[ \ket{\flip{\psi_0}} = \fcoef \ket{\flip{\fstate}} + \ucoef \ket{\flip{\ustate}} + \ket{\flip{\rstate}}. \]
	To obtain the value of $\braket{ \flip{\psi_0} | U^{2t+1} | \psi_0 }$, we look at the inner products of the vectors that contribute to $\ket{\flip{\psi_0}}$ and $\ket{\psi_0}$. Since unitary operations preserve the inner product between vectors, we have that $\braket{ \flip{\fstate} | U^{2t+1} | \rstate }=\braket{\fstate | \rstate} = 0$ and $\braket{ \flip{\ustate} | U^{2t+1} | \rstate }=\braket{\ustate | \rstate} = 0$. Note that the ``flip'' transformation that takes $\ket{\psi}$ to $\ket{\flip{\psi}}$ is unitary. Hence $\braket{\flip{\rstate} | \flip{\fstate}} = \braket{\rstate | \fstate} = 0$ and $\braket{\flip{\rstate} | \flip{\ustate}} = \braket{\rstate | \ustate} = 0$. Therefore,
	\[ \braket{ \flip{\psi_0} | U^{2t+1} | \psi_0 } = |\fcoef|^2 - |\ucoef|^2 + |\rcoef|^2 \braket{ \flip{\rstate} | U^{2t} | \rstate }. \]
	Thus we have
	\begin{align*}
		\left| \braket{ \flip{\psi_0} | U^{2t+1} | \psi_0 }\right| 
		&\geq \left| |\fcoef|^2 - |\ucoef|^2  \right|- |\rcoef|^2 \\
		&=\left| |\fcoef|^2 - |\ucoef|^2 \right|- (1-|\fcoef|^2 - |\ucoef|^2) \\
		&= 2\max\left(|\fcoef|^2,|\ucoef|^2\right)-1.
	\end{align*}
\end{proof}

%-------------------------------------------------------------------------------

\subsection{Conditions for Oscillations}

With this theorem, we can determine the conditions on $|\fcoef|^2$ and $|\ucoef|^2$ for oscillations between two states to occur. If $|\fcoef|^2 + |\ucoef|^2 > \frac{1}{2}$, then after any number of even steps the quantum walk is in $\ket{\psi_0}$ with probability $\Theta(1)$. This occurs because the starting state is closer to some eigenvector of $U^2$ with eigenvalue 1 than any other eigenvector with a different eigenvalue. At odd steps, the quantum walk is in $\ket{\flip{\psi_0}}$ with probability $\Theta(1)$ if either $|\fcoef|^2 > \frac{1}{2}$ or $|\ucoef|^2 > \frac{1}{2}$. Thus in this case, the starting state should be very close to either a flip or a uniform state.

The value of $\ucoef$ is easy to explicitly calculate. If the graph is non-bipartite, then the only uniform state is $\ket{\ustate_V}$, so $\ucoef = \braket{\ustate_V | \psi_0}$. If the graph is bipartite, then there are two uniform basis states $\ket{\ustate_X}$ and $\ket{\ustate_Y}$, so $|\ucoef|^2 = |\ucoef_X|^2 + |\ucoef_Y|^2$, where $\ucoef_X = \braket{\ustate_X | \psi_0}$ and $\ucoef_Y = \braket{\ustate_Y | \psi_0}$.

In general, the value of $|\fcoef|^2$ is more difficult to find since the size of the basis for the flip states can be large. In the next section, however, we prove that $|\fcoef|^2$ can be lower bounded using power dissipation, effective resistance, and connectivity of a related classical electric circuit. Since the quantum walk alternates between $\ket{\psi_0}$ and $\ket{\flip{\psi_0}}$ with probability $\Theta(1)$ if $|\alpha|^2 > \frac{1}{2}$, these quantities of classical circuits can be used to inform whether the quantum walk oscillates between two states.

%-------------------------------------------------------------------------------
% Section
%-------------------------------------------------------------------------------

\section{\label{sec:circuits} Oscillations Using Electric Circuits}

\subsection{Network Flows and Flip States}

In this section, we show that there is a close connection between the flip states of a quantum walk on a graph and the flows in a related network, of which electric current is a special case. Recall that a quantum walk has two amplitudes on each edge $\{u,v\}$, one from vertex $u$ going to vertex $v$ and another from $v$ going to $u$. So to associate this to a network flow or current, we need to split up these two amplitudes. This can be done using the \emph{bipartite double graph} from graph theory \cite{Brouwer1989}. Given a graph $G$, its bipartite double graph $G_b$ is constructed as follows. For each vertex $v$ in $G$, there are two vertices $v_{\tout}$ and $v_{\tin}$ in $G_b$. For each edge $\{u, v\}$ in $G$, there are two edges $\{u_{\tout}, v_{\tin}\}$ and $\{v_{\tout}, u_{\tin}\}$ in $G_b$, connected as shown in Fig.~\ref{fig:edge_bipartite}. As an example, consider the complete graph of three vertices in Fig.~\ref{fig:complete3}. Applying this doubling procedure, we get its bipartite double graph in Fig.~\ref{fig:complete3_double}. Note that this bipartite double graph is a cycle, which we make evident by rearranging the graph in Fig.~\ref{fig:complete3_double_rearrange}.

\begin{figure}
\begin{center}
	\includegraphics{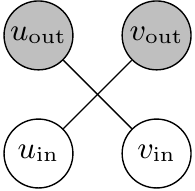}
	\caption{\label{fig:edge_bipartite} The counterpart of an edge $\{u, v\}$ of $G$ in the bipartite double graph $G_b$.}
\end{center}
\end{figure}

\begin{figure}
\begin{center}
	\subfloat[]{
		\label{fig:complete3}
		\includegraphics{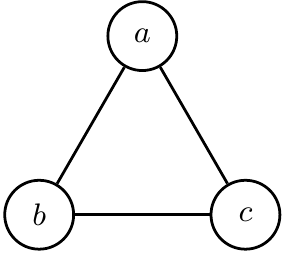}
	} \quad \quad
	\subfloat[]{
		\label{fig:complete3_double}
		\includegraphics{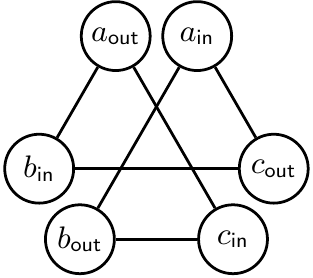}
	}

	\subfloat[]{
		\label{fig:complete3_double_rearrange}
		\includegraphics{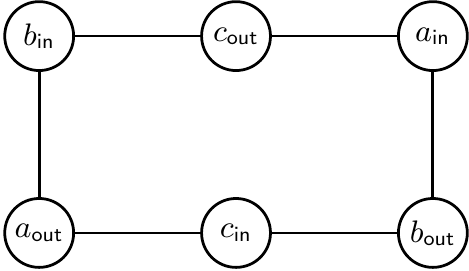}
	}
	\caption{(a) $K_3$, the complete graph of three vertices. (b) The bipartite double graph of $K_3$. (c) Rearrangement of the bipartite double graph of $K_3$.}
\end{center}
\end{figure}

Now let us define a network flow in a graph, which is another concept from graph theory \cite{Bollobas1979}. Let $G$ be a graph and $E(G)$ its edge set. Denote by $\vec{E}(G) = \{(u, v) \mid \{u, v\} \in E(G)\}$ the set of directed edges of $G$. A \emph{network flow} is a function $f : \vec{E}(G) \rightarrow \mathbb C$ that assigns a certain amount of flow passing through each edge \cite{Bollobas1979}. A network flow is called a \emph{circulation} if it satisfies two properties: (a) Skew symmetry:~the flow on an edge $(u, v)$ is equal to the negative flow on the reversed edge $(v, u)$, \textit{i.e.}, $f(u, v) = -f(v, u)$, and (b) Flow conservation:~the amount of the incoming flow at a vertex $v$ is equal to the amount of the flow outgoing from $v$, \textit{i.e.}, $\sum_{u : u \sim v} f(u, v) = 0$.

Each circulation $f$ of $G_b$ maps to a (possibly unnormalized) flip state $\ket{\phi'}$ of $G$ (whose normalized form we denote by $\ket{\phi}$), and vice versa, by the following bijection:
\begin{equation}
	\label{eq:bij}
	f(u_{\tout}, v_{\tin})= \braket{uv | \phi'}, \quad f(v_{\tin}, u_{\tout}) = -\braket{uv | \phi'}.
\end{equation}
The amplitudes of the outgoing edges of $u$ sum up to 0, thus the flow is conserved at vertex $u_{\tout}$. Similarly flow conservation holds also at vertex $u_{\tin}$.

Now consider an arbitrary starting state $\ket{\psi_0}$. To find the value of $|\alpha|^2$, we need the flip state $\ket{\phi}$ that is the closest to $\ket{\psi_0}$ among all the flip states. Alternatively, we can search for an optimal circulation in $G_b$. Next we show that a circulation that is sufficiently close to optimal can be obtained using electric networks.

%-------------------------------------------------------------------------------

\subsection{Electric Networks and Oscillations}

We examine \emph{electric networks} \cite{Doyle1984}, which are graphs where each edge is replaced by a unit resistor. Each vertex of the graph may also be either a source or a sink of some amount of current. We construct an electric network $\mathcal N_b$ from $G_b$ and $\ket{\psi_0}$ in the following fashion. The vertex set of $\mathcal N_b$ is equal to that of $G_b$. Examine the amplitude $\braket{uv|\psi_0} = \delta$ at an edge $\ket{uv}$.
\begin{enumerate}[label=(\alph*)]
	\item If $\delta = 0$, add an edge $\{u_{\tout}, v_{\tin}\}$ to $\mathcal N_b$ with a unit resistance assigned.
	\item If $\delta \neq 0$, inject $\delta$ units of current at $v_{\tin}$ and extract the same amount at $u_{\tout}$.
\end{enumerate}
Note that $\mathcal N_b$ could have multiple sources and sinks of current by the construction.

For example, let $G$ be a complete graph of three vertices $a, b, c$, which we considered in Fig.~\ref{fig:complete3}. Its double bipartite graph $G_b$ is a cycle of length 6, as shown in Fig.~\ref{fig:complete3_double_rearrange}. Say the starting state of the walk is $\ket{\psi_0} = \ket{ab}$. Then the electric network $\mathcal N_b$ is a path of length 5, as the edge $\{a_{\tout}, b_{\tin}\}$ is excluded from the electric network, as shown in Fig.~\ref{fig:complete_network}. A unit current is then injected at $b_{\tin}$ and extracted at $a_{\tout}$. 

\begin{figure}
\begin{center}
	\includegraphics{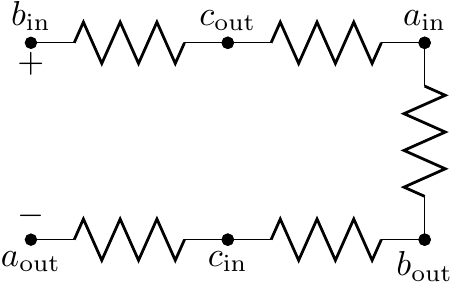}
	\caption{\label{fig:complete_network} The electric network $\mathcal N_b$ for the complete graph of three vertices $a, b, c$ with starting state $\ket{\psi_0} = \ket{ab}$.}
\end{center}
\end{figure}

Let $I_b(x,y)$ be the current flowing on the edge from $x$ to $y$ in $\mathcal N_b$. From this current, we can construct a circulation $f$ of $G_b$. Consider an edge $\{u_{\tout}, v_{\tin}\}$ in $G_b$.
\begin{enumerate}[label=(\alph*)]
	\item	If $\{u_{\tout}, v_{\tin}\} \in E(\mathcal N_b)$, set
		\[ f(u_{\tout}, v_{\tin}) = -f(v_{\tin}, u_{\tout}) = I_b(u_{\tout}, v_{\tin}). \]
	\item	Otherwise, set 
		\[ f(u_{\tout}, v_{\tin}) = -f(v_{\tin}, u_{\tout}) = \braket{uv|\psi_0}. \]
\end{enumerate}
By construction, $f$ satisfies skew symmetry. Since the net current flowing into a vertex equals the net current leaving it, $f$ also satisfies flow conservation, so it is a circulation.

By the bijection \eqref{eq:bij}, this circulation $f$ corresponds to a flip state $\ket{\phi'}$, which in general is unnormalized. Since each amplitude in $\ket{\phi'}$ is either an amplitude in $\ket{\psi_0}$ or a current in $\mathcal N_b$, we have
\[ \sum_{(u,v)} \left| \braket{uv|\phi'} \right|^2 = \sum_{(u,v)} \left| \braket{uv|\psi_0} \right|^2 + \!\!\!\!\!\!\!\!\! \sum_{\{x,y\} \in E(\mathcal N_b)} \!\!\!\!\!\!\!\!\! |I_b(x, y)|^2. \]
Since the starting state $\ket{\psi_0}$ is normalized, the first sum on the right-hand side is $1$. For the rightmost sum, note the power dissipation through a resistor is $I^2 R$ \cite{Giancoli2008}, where $I$ is the current through the resistor and $R$ its resistance. In our network, $R = 1$ since we have unit resistors, so the power dissipation is $I^2$. Then the rightmost sum is equal to the power dissipation $P(\mathcal{N}_b)$ in the network, i.e.,
\[ P(\mathcal N_b) = \!\!\!\!\!\!\!\!\! \sum_{\{x,y\} \in E(\mathcal N_b)} \!\!\!\!\!\!\!\!\! |I_b(x,y)|^2. \]
Therefore,
\[ \sum_{(u,v)} \left| \braket{uv|\phi'} \right|^2 = 1 + P(\mathcal N_b). \]

It is also possible that the current cannot flow in $\mathcal N_b$, depending on the starting state $\ket{\psi_0}$. For example, there is no flow if $G$ is the complete graph of three vertices with $\ket{\psi_0} = \frac{1}{\sqrt 2} (\ket{ab} + \ket{ba})$. In this case, we regard the power dissipation as being infinitely large.

Normalizing $\ket{\phi'}$ and calling it  $\ket{\phi}$,
\[ \ket{\phi} = \frac{1}{\sqrt{1 + P(\mathcal N_b)}} \ket{\phi'}. \]
We are interested in how close this flip state $\ket{\phi}$ is to the starting state $\ket{\psi_0}$. This is given by the inner product
\begin{align*}
	\braket{\psi_0|\phi}
		&= \frac{1}{\sqrt{1 + P(\mathcal N_b)}} \braket{\psi_0|\phi'} \\
		&= \frac{1}{\sqrt{1 + P(\mathcal N_b)}} \sum_{(u,v)} \braket{\psi_0|uv} \braket{uv|\phi'} \\
		&= \frac{1}{\sqrt{1 + P(\mathcal N_b)}} \sum_{(u,v)} \left| \braket{\psi_0|uv} \right|^2 \\
		&= \frac{1}{\sqrt{1 + P(\mathcal N_b)}}.
\end{align*}
To go from the second to third line, recall when $\braket{uv|\psi_0} = \delta \ne 0$, then $\delta$ units of current are injected at $v_{\tin}$ and extracted from $u_{\tout}$. Then $f(u_{\tout},v_{\tin}) = \delta$, and from the bijection \eqref{eq:bij}, $\braket{uv|\phi'} = \delta = \braket{uv|\psi_0}$. For the last line, $\ket{\psi_0}$ is a normalized state. Hence the contribution of the flip states to the starting state is lower bounded by
\begin{equation}
	\label{eq:alpha_bound}
	|\alpha|^2 \geq \left|\braket{\psi_0|\phi}\right|^2 = \frac{1}{1 + P(\mathcal N_b)}.
\end{equation}
This lower bound is maximized when the power dissipation is minimized. Thomson's Principle states that the values of the current determined by Kirchhoff's Circuit Laws minimize the power dissipation. Hence the current through the network $\mathcal{N}_b$ naturally yields the best bound in \eqref{eq:alpha_bound}.

Together with Theorem \ref{thm:localization-bound}, this gives the following estimate on the oscillations: after an even number of steps $2t$ or an odd number of steps $2t+1$,
\begin{align}
	&\left| \braket{ \psi_0 | U^{2t} | \psi_0 } \right|, \big| \langle \flip{\psi_0} | U^{2t+1} | \psi_0 \rangle \big| 
		\geq 2|\fcoef|^2 - 1 \notag \\
		&\quad\quad\quad \geq 2\cdot \frac{1}{1 + P(\mathcal N_b)} - 1 
		= \frac{1-P(\mathcal N_b)}{1+P(\mathcal N_b)}. \label{eq:localization-energy}
\end{align}
This leads to our first and most general result linking electric circuits and oscillations:
\begin{theorem}
	For an arbitrary starting state $\ket{\psi_0}$, low power dissipation $P(\mathcal N_b) < 1$ implies oscillations between $\ket{\psi_0}$ and $\ket{\flip{\psi_0}}$.
\end{theorem}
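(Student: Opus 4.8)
The plan is to observe that this theorem is essentially an immediate consequence of the machinery already assembled, so the proof amounts to unwinding the chain of inequalities in \eqref{eq:localization-energy} and interpreting the outcome. First I would recall that the bijection \eqref{eq:bij} between circulations of $G_b$ and flip states, together with the construction of $\mathcal N_b$ and Thomson's Principle, produced the lower bound \eqref{eq:alpha_bound}, namely $|\fcoef|^2 \geq 1/(1 + P(\mathcal N_b))$, where $\fcoef$ is the amplitude of $\ket{\psi_0}$ on its nearest flip state. Feeding this into Theorem \ref{thm:localization-bound} yields \eqref{eq:localization-energy}: for every $t$, both $|\braket{\psi_0 | U^{2t} | \psi_0}|$ and $|\braket{\flip{\psi_0} | U^{2t+1} | \psi_0}|$ are bounded below by $(1 - P(\mathcal N_b))/(1 + P(\mathcal N_b))$.

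Next I would impose the hypothesis $P(\mathcal N_b) < 1$ and read off what it gives. The quantity $(1 - P)/(1 + P)$ is strictly decreasing in $P$, positive precisely when $P < 1$, and vanishing at $P = 1$. Equivalently, $P(\mathcal N_b) < 1$ is exactly the condition $|\fcoef|^2 > \tfrac12$ identified earlier as the threshold past which the starting state is closer to a $1$-eigenvector of $U^2$ than to any eigenvector with a different eigenvalue. Hence under the hypothesis the lower bound $c := (1 - P(\mathcal N_b))/(1 + P(\mathcal N_b)) > 0$ is a positive constant independent of the number of steps.

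The final step is interpretation. Squaring, the probability of measuring $\ket{\psi_0}$ after any even number of steps satisfies $|\braket{\psi_0 | U^{2t} | \psi_0}|^2 \geq c^2$, and the probability of measuring $\ket{\flip{\psi_0}}$ after any odd number of steps satisfies $|\braket{\flip{\psi_0} | U^{2t+1} | \psi_0}|^2 \geq c^2$. Since $c^2$ is a positive constant that does not decay with $t$, the walk stays supported with probability $\Theta(1)$ on $\ket{\psi_0}$ at even times and on $\ket{\flip{\psi_0}}$ at odd times, for all time, which is precisely the assertion that the walk oscillates between $\ket{\psi_0}$ and $\ket{\flip{\psi_0}}$.

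I do not anticipate a genuine obstacle, since all of the analytic work already resides in \eqref{eq:alpha_bound} and Theorem \ref{thm:localization-bound}. The only points requiring care are bookkeeping: the edge case in which no current can flow in $\mathcal N_b$, so that $P(\mathcal N_b)$ is taken to be infinite, is automatically excluded by the hypothesis $P(\mathcal N_b) < 1$; and the odd-step bound of Theorem \ref{thm:localization-bound} uses $\max(|\fcoef|^2, |\ucoef|^2) \geq |\fcoef|^2$, so discarding the uniform contribution still leaves the valid estimate $2|\fcoef|^2 - 1$ used in \eqref{eq:localization-energy}.
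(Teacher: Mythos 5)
Your proposal is correct and follows essentially the same route as the paper: the theorem is stated there as a direct corollary of the chain $|\fcoef|^2 \geq 1/(1+P(\mathcal N_b))$ from \eqref{eq:alpha_bound} combined with Theorem \ref{thm:localization-bound} to give \eqref{eq:localization-energy}, whose right-hand side $(1-P)/(1+P)$ is positive exactly when $P(\mathcal N_b)<1$. Your added remarks on the infinite-dissipation edge case and on discarding the $|\ucoef|^2$ term in the odd-step bound are consistent with the paper's treatment.
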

In general, the power dissipation of a circuit may be hard to find. But for some initial states, we can relate it to effective resistance, which is often easier to find and is well-studied for a number of graphs.

%-------------------------------------------------------------------------------

\subsection{Oscillatory Localization of Single-Edge States}

Consider the starting state $\ket{\psi_0} = \ket{ab}$, where the walker is initially localized at vertex $a$ and points towards vertex $b$. Then the current in $\mathcal N_b$ is a unit flow (flow of one unit from $b_{\tin}$ to $a_{\tout}$), and thus $P(\mathcal N_b) = R(\mathcal N_b)$ where $R(\mathcal N)$ denotes the effective resistance of $\mathcal{N}$.

A related notion is the \emph{resistance distance} $\Omega_{a,b}$ between two vertices $a$ and $b$ in a graph \cite{Klein1993}. It is the effective resistance between $a$ and $b$ in a network obtained from the graph by replacing each edge by a unit resistor. 

The only edge in which $G_b$ and $\mathcal N_b$ differ is $\{a_{\tout},b_{\tin}\}$, so we may interpret $G_b$ as an electric network where $\{a_{\tout},b_{\tin}\}$ with a unit resistance and $\mathcal N_b$ are connected in parallel. Hence low resistance distance $\Omega_{a_{\tout},b_{\tin}}$ in $G_b$ implies low $R(\mathcal N_b)$:
\begin{equation}
	\label{eq:resistance}
	\Omega_{a_{\tout},b_{\tin}} = \frac{1}{\frac{1}{1} + \frac{1}{R(\mathcal N_b)}} = 1 - \frac{1}{1+R(\mathcal N_b)},
\end{equation}
where we used the parallel resistance formula \cite{Giancoli2008}. Substituting in \eqref{eq:localization-energy}, we obtain
\begin{equation}
\left| \braket{ \psi_0 | U^{2t} | \psi_0 }\right|, \big| \langle \flip{\psi_0} | U^{2t+1} | \psi_0 \rangle \big| \geq 1-2\Omega_{a_{\tout},b_{\tin}}.
\end{equation}
Since $\ket{\flip{\psi_0}} = -\ket{ba}$, we have proven that
\begin{theorem}
	\label{thm:loc1}
	Low resistance distance $\Omega_{a_{\tout},b_{\tin}} < 1/2$ in $G_b$ implies oscillatory localization between $\ket{\psi_0} = \ket{ab}$ and $\ket{\flip{\psi_0}} = -\ket{ba}$.
\end{theorem}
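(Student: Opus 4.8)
The plan is to assemble the chain of bounds already established in this section, specialized to the single-edge starting state $\ket{\psi_0} = \ket{ab}$, and then read off the conclusion from the sign of the resulting estimate. Essentially every analytic ingredient is in place, so the proof is a matter of collecting them in the right order rather than introducing anything new.

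First I would record that for $\ket{\psi_0} = \ket{ab}$ the only nonzero amplitude is $\braket{ab|\psi_0} = 1$, so the construction of $\mathcal{N}_b$ injects a single unit of current at $b_{\tin}$ and extracts it at $a_{\tout}$. The induced current is therefore a unit flow between one source--sink pair, which means $P(\mathcal{N}_b) = R(\mathcal{N}_b)$, the effective resistance of $\mathcal{N}_b$ between $a_{\tout}$ and $b_{\tin}$. This is the specialization that turns the general power-dissipation bound \eqref{eq:alpha_bound} into a statement about resistance.

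Next I would relate $R(\mathcal{N}_b)$ to the resistance distance $\Omega_{a_{\tout},b_{\tin}}$ in the full bipartite double graph $G_b$. Since $G_b$ and $\mathcal{N}_b$ differ only by the deleted edge $\{a_{\tout},b_{\tin}\}$, one may view $G_b$ as $\mathcal{N}_b$ placed in parallel with a single unit resistor across $a_{\tout}$ and $b_{\tin}$. The parallel-resistance formula then yields \eqref{eq:resistance}, namely $\Omega_{a_{\tout},b_{\tin}} = 1 - \frac{1}{1+R(\mathcal{N}_b)}$. Substituting $P(\mathcal{N}_b) = R(\mathcal{N}_b)$ into the general energy estimate \eqref{eq:localization-energy} gives the lower bound $\frac{1-R(\mathcal{N}_b)}{1+R(\mathcal{N}_b)}$, and a short algebraic check confirms this equals $1 - 2\Omega_{a_{\tout},b_{\tin}}$.

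Finally, I would observe that $\ket{\flip{\psi_0}} = -\ket{ba}$ directly from the definition of the flipped state applied to $\ket{ab}$. Under the hypothesis $\Omega_{a_{\tout},b_{\tin}} < 1/2$, the quantity $1 - 2\Omega_{a_{\tout},b_{\tin}}$ is strictly positive, so both $\left|\braket{\psi_0|U^{2t}|\psi_0}\right|$ and $\big|\langle\flip{\psi_0}|U^{2t+1}|\psi_0\rangle\big|$ remain bounded away from zero for every $t$ --- precisely the statement that the walk oscillates persistently between $\ket{ab}$ at even times and $-\ket{ba}$ at odd times. Since all the heavy machinery (the flip-state/circulation bijection, Thomson's Principle bounding $P(\mathcal{N}_b)$, and Theorem \ref{thm:localization-bound}) has already been proven, there is no genuine obstacle remaining; the one step demanding care is the parallel-resistance interpretation, where I must justify cleanly that reinserting the single removed branch $\{a_{\tout},b_{\tin}\}$ corresponds exactly to a parallel combination of unit resistors, so that $R(\mathcal{N}_b)$ and $\Omega_{a_{\tout},b_{\tin}}$ are linked by the reciprocal-addition rule.
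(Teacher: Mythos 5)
Your proposal is correct and follows essentially the same route as the paper: identifying the current in $\mathcal{N}_b$ as a unit flow so that $P(\mathcal{N}_b)=R(\mathcal{N}_b)$, using the parallel-resistance formula to relate $R(\mathcal{N}_b)$ to $\Omega_{a_{\tout},b_{\tin}}$, and substituting into the power-dissipation bound to obtain $1-2\Omega_{a_{\tout},b_{\tin}}$. The algebraic check that $\frac{1-R}{1+R}=1-2\Omega$ and the observation $\ket{\flip{\psi_0}}=-\ket{ba}$ are both as in the paper.
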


Notably, if $G$ is bipartite, then $G_b$ consists of two identical copies of $G$. It follows that $\Omega_{a,b} = \Omega_{a_{\tout},b_{\tin}}$ because the current may flow from $a_{\tout}$ to $b_{\tin}$ in only one of the copies. Then low effective resistance in the original graph $G$ is sufficient to imply localization of $\ket{ab}$.

%-------------------------------------------------------------------------------

\subsection{Localization of Self-Flip States}

For a particular set of starting states, we can essentially repeat the same analysis with the original graph $G$, not its bipartite double graph $G_b$. We say that $\ket{\psi}$ is a \emph{self-flip state} if $\braket{uv|\psi} = -\braket{vu|\psi}$ for each edge $\ket{uv}$. That is, $\ket{\psi} = \ket{\flip{\psi}}$. Then a self-flip state undergoing oscillations is stationary, since it alternates between $\ket{\psi}$ and $\ket{\flip{\psi}} = \ket{\psi}$.

Suppose the initial state $\ket{\psi_0}$ is a self-flip state. We construct an electric network $\mathcal N$ with the same vertex set as $G$. Examine each \emph{undirected edge} $\{u, v\}$ once; let $\braket{uv|\psi_0} = \delta$.
\begin{enumerate}[label=(\alph*)]
	\item If $\delta = 0$, add an edge $\{u, v\}$ to $\mathcal N$ with a unit resistance assigned.
	\item If $\delta \neq 0$, inject $\delta$ units of current at $v$ and extract the same amount at $u$.
\end{enumerate}

Let $I(u,v)$ again be the current flowing on the edge from $u$ to $v$ in $\mathcal N$. Since $\braket{uv|\psi_0}=-\braket{vu|\psi_0}$ and $I(u,v) = -I(v,u)$, we can construct a circulation $f$ of $G$ similarly to the previous construction of $f$ from $I_b$. For an edge $\{u, v\}$ in $G$:
\begin{enumerate}[label=(\alph*)]
	\item	If $\{u, v\} \in E(\mathcal N)$, set
		\[ f(u, v) = -f(v, u) = I(u, v). \]
	\item	Otherwise, set 
		\[ f(u, v) = -f(v, u) = \braket{uv|\psi_0}. \]
\end{enumerate}

Let $\ket{\phi'}$ be the unnormalized flip state corresponding to $f$ according to \eqref{eq:bij}. Again, since the amplitude on each edge $\ket{uv}$ in $\ket{\phi'}$ is either $I(u,v)$ or $\braket{uv|\psi_0}$, we have
\[ \sum_{(u,v)} \left| \braket{uv|\phi'} \right|^2 = \sum_{(u,v)} \left| \braket{uv|\psi_0} \right|^2 + 2 \!\!\!\!\!\!\!\!\! \sum_{\{x,y\} \in E(\mathcal N)} \!\!\!\!\!\!\!\!\! |I(x, y)|^2. \]
The last sum has a factor of 2 as each edge $\{u,v\}$ from $\mathcal N$ sets the amplitudes of both $\ket{uv}$ and $\ket{vu}$ by the construction of $f$. Thus the expression is equal to $1 + 2P(\mathcal{N})$. By the same procedure as before, we use this to find the normalized state $\ket{\phi}$, the overlap $\braket{\psi_0|\phi}$, and a lower bound on $|\alpha|^2$, yielding a lower bound on localization:
\begin{equation}
	\label{eq:self-flip-energy}
	\left| \braket{ \psi_0 | U^{2t} | \psi_0 }\right|, \big| \langle \flip{\psi_0} | U^{2t+1} | \psi_0 \rangle \big| \geq \frac{1-2P(\mathcal N)}{1+2P(\mathcal N)}.
\end{equation}
This is similar to \eqref{eq:localization-energy}, and it yields the next result:

\begin{theorem}
	For a self-flip starting state $\ket{\psi_0}$, low power dissipation $P(\mathcal N) < 1/2$ implies that $\ket{\psi_0}$ is stationary.
\end{theorem}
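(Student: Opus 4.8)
The plan is to leverage the overlap bound \eqref{eq:self-flip-energy} that was just established for self-flip starting states, and then to exploit the defining property of a self-flip state to collapse the even- and odd-step cases into a single statement. First I would recall how that bound is obtained: the electric network $\mathcal{N}$ built from $\ket{\psi_0}$ on the original graph $G$ (rather than on $G_b$) produces, via the bijection \eqref{eq:bij}, an unnormalized flip state $\ket{\fstate'}$ with $\sum_{(u,v)} |\braket{uv|\fstate'}|^2 = 1 + 2P(\mathcal{N})$, the factor of $2$ arising because each undirected edge $\{u,v\}$ of $\mathcal{N}$ fixes the amplitudes of both directed edges $\ket{uv}$ and $\ket{vu}$. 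Normalizing and computing the overlap $\braket{\psi_0|\fstate} = 1/\sqrt{1+2P(\mathcal{N})}$ exactly as in the $G_b$ case then gives $|\fcoef|^2 \geq 1/(1+2P(\mathcal{N}))$, and substituting this into Theorem \ref{thm:localization-bound} yields \eqref{eq:self-flip-energy}.

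The genuinely new step for this theorem is short. Since $\ket{\psi_0}$ is a self-flip state, by definition $\ket{\flip{\psi_0}} = \ket{\psi_0}$, so the odd-step quantity $\big|\braket{\flip{\psi_0} | U^{2t+1} | \psi_0}\big|$ appearing in \eqref{eq:self-flip-energy} is simply $\big|\braket{\psi_0 | U^{2t+1} | \psi_0}\big|$. Hence \eqref{eq:self-flip-energy} bounds $\big|\braket{\psi_0 | U^t | \psi_0}\big|$ from below by $(1-2P(\mathcal{N}))/(1+2P(\mathcal{N}))$ for \emph{every} time step $t$, even and odd alike.

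Finally I would observe that the hypothesis $P(\mathcal{N}) < 1/2$ forces $1 - 2P(\mathcal{N}) > 0$, so this lower bound is a strictly positive constant independent of $t$. Thus the walker never escapes $\ket{\psi_0}$: the probability of finding it in $\ket{\psi_0}$ stays $\Theta(1)$ for all time, which is exactly what it means for $\ket{\psi_0}$ to be stationary. I do not expect a serious obstacle, since the heavy lifting was already done in deriving \eqref{eq:self-flip-energy}; were the bound proved from scratch, the one point demanding care is correctly tracking the factor of $2$ in the energy sum, which is precisely what distinguishes this self-flip bound from the single-edge bound \eqref{eq:localization-energy}.
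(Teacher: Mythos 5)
Your proposal is correct and follows essentially the same route as the paper: the bound \eqref{eq:self-flip-energy} is derived exactly as you describe (network $\mathcal N$ on $G$, factor of $2$ in the energy sum, overlap $1/\sqrt{1+2P(\mathcal N)}$, substitution into Theorem~\ref{thm:localization-bound}), and the paper likewise concludes by noting that a self-flip state satisfies $\ket{\flip{\psi_0}}=\ket{\psi_0}$, so the oscillation degenerates into stationarity once $P(\mathcal N)<1/2$ makes the lower bound positive.
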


We can relate this to effective resistance for the particular starting state $\ket{\psi_0} = (\ket{ab} - \ket{ba})/\sqrt{2}$, where the particle is localized at vertices $a$ and $b$. By constructing $\mathcal N$ as described before, we obtain a current of $1/\sqrt 2$ units from $a$ to $b$. Then the power dissipation is equal to this current squared times the effective resistance, so $2P(\mathcal N) = R(\mathcal N)$. Then similarly to \eqref{eq:resistance}, we deduce that
\[ \Omega_{a,b} = \frac{1}{\frac{1}{1} + \frac{1}{R(\mathcal N)}} = 1 - \frac{1}{1+R(\mathcal N)}. \]
Therefore, substituting this in \eqref{eq:self-flip-energy}, we get
\[ \left| \braket{ \psi_0 | U^{2t} | \psi_0 }\right|, \big| \langle \flip{\psi_0} | U^{2t+1} | \psi_0 \rangle \big| \geq 1 - 2\Omega_{a,b}. \]
This proves another new result:
\begin{theorem}
	\label{thm:loc2}
	Low resistance distance $\Omega_{a,b} < 1/2$ in $G$ implies localization of the starting state $\ket{\psi_0} = (\ket{ab} - \ket{ba})/\sqrt{2}$.
\end{theorem}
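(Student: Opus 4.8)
The plan is to recognize that $\ket{\psi_0} = (\ket{ab} - \ket{ba})/\sqrt{2}$ is a self-flip state and then feed it through the self-flip framework, so that the entire theorem reduces to a single effective-resistance computation plugged into \eqref{eq:self-flip-energy}. First I would verify the self-flip condition explicitly: the only edge carrying nonzero amplitude is $\{a,b\}$, and since $\braket{ab|\psi_0} = 1/\sqrt{2} = -\braket{ba|\psi_0}$ while every other pair has zero amplitude, we have $\braket{uv|\psi_0} = -\braket{vu|\psi_0}$ on each edge, i.e.\ $\ket{\psi_0} = \ket{\flip{\psi_0}}$. This is what licenses the construction of $\mathcal{N}$ and the bound \eqref{eq:self-flip-energy}, and it also means that ``oscillation'' here is genuine stationarity, since the walk alternates between $\ket{\psi_0}$ and $\ket{\flip{\psi_0}} = \ket{\psi_0}$.

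Next I would pin down the network $\mathcal{N}$. Because only $\{a,b\}$ has $\delta \neq 0$, the construction deletes exactly that edge, injects $\delta = 1/\sqrt{2}$ units of current at $b$, extracts the same at $a$, and turns every other edge of $G$ into a unit resistor. The resulting current is thus $1/\sqrt{2}$ times a unit flow from $a$ to $b$, so the dissipated power is $P(\mathcal{N}) = |\delta|^2 R(\mathcal{N}) = \tfrac12 R(\mathcal{N})$, giving the clean relation $2P(\mathcal{N}) = R(\mathcal{N})$ that matches the coefficient appearing in \eqref{eq:self-flip-energy}.

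Then I would convert $R(\mathcal{N})$ into the resistance distance $\Omega_{a,b}$ of the original graph. Since $\mathcal{N}$ and $G$ differ only in the deleted edge $\{a,b\}$, the graph $G$ is precisely $\mathcal{N}$ with a unit resistor reinserted in parallel across $a$ and $b$, so the parallel-resistance formula yields $\Omega_{a,b} = 1 - \frac{1}{1+R(\mathcal{N})}$, equivalently $R(\mathcal{N}) = \Omega_{a,b}/(1-\Omega_{a,b})$. Substituting $2P(\mathcal{N}) = R(\mathcal{N})$ into \eqref{eq:self-flip-energy} and simplifying the M\"obius-type expression $\frac{1-R(\mathcal{N})}{1+R(\mathcal{N})}$ collapses neatly to $1 - 2\Omega_{a,b}$.

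Finally I would read off the conclusion: the even- and odd-step overlaps are both bounded below by $1 - 2\Omega_{a,b}$, which is strictly positive exactly when $\Omega_{a,b} < 1/2$. A positive constant lower bound forces $|\fcoef|^2 > 1/2$ (via Theorem~\ref{thm:localization-bound}) and hence a $\Theta(1)$ return probability to $\ket{\psi_0}$ for all time, i.e.\ localization. The only genuinely delicate points are confirming that $\mathcal{N}$ and $G$ differ in a single edge so the parallel combination is legitimate, and checking the current magnitude $1/\sqrt{2}$; once the self-flip setup of the preceding subsection is in place, no real obstacle remains, and the theorem is essentially a substitute-and-simplify corollary of \eqref{eq:self-flip-energy}.
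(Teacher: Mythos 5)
Your proposal is correct and follows essentially the same route as the paper: identify $\ket{\psi_0}=(\ket{ab}-\ket{ba})/\sqrt{2}$ as a self-flip state, note that the current is $1/\sqrt{2}$ times a unit flow so $2P(\mathcal N)=R(\mathcal N)$, use the parallel-resistance identity $\Omega_{a,b}=1-\tfrac{1}{1+R(\mathcal N)}$, and substitute into \eqref{eq:self-flip-energy} to get the bound $1-2\Omega_{a,b}$. The only nitpick is that the logic runs from $\Omega_{a,b}<1/2$ to $|\fcoef|^2>1/2$ to the positive overlap bound, not the reverse as your final paragraph briefly suggests, but this does not affect correctness.
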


%-------------------------------------------------------------------------------

\subsection{High Connectivity}

Resistance distance is known to be closely related to edge-connectivity.
Two vertices $s$ and $t$ are said to be $k$-edge-connected if there exists $k$ edge-disjoint paths from $s$ to $t$.

A tight relation was proved in \cite{Ahn2013}: if $s$ and $t$ are $k$-edge-connected, then
\begin{equation}
	\label{eq:connectivity-bound}
	\Omega_{s,t} = O\left( \frac{N^{2/3}}{k} \right).
\end{equation}

If one also looks at the lengths of the paths, then a stronger statement is true. Let $s$ and $t$ be connected by $k$ edge-disjoint paths with lengths $\ell_1, \ell_2, \ldots, \ell_k$. The resistance distance between $s$ and $t$ in the subgraph induced by these paths may only be larger than in the original graph. In this subgraph, the paths correspond to $k$ resistors connected in parallel with resistances equal to $\ell_1, \ell_2, \ldots, \ell_k$. Then we have the following upper bound:
\begin{equation}
	\label{eq:paths}
	\Omega_{s,t} \leq \frac{1}{\sum_{i=1}^k \frac{1}{\ell_i}}.
\end{equation}

In the context of this paper, we arrive at the following conclusion by Theorems~\ref{thm:loc1} and \ref{thm:loc2}, that high edge-connectivity implies low resistance distance:
\begin{theorem}
	High edge-connectivity between $a_{\tout}$ and $b_{\tin}$ in $G_b$ implies oscillatory localization between $\ket{ab}$ and $-\ket{ba}$. On the other hand, high edge-connectivity between $a$ and $b$ in $G$ implies that $\ket{\psi_0} = (\ket{ab} - \ket{ba})/\sqrt{2}$ is stationary.
\end{theorem}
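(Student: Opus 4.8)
The plan is to treat this as a corollary that chains together the resistance-to-localization results already established with the connectivity-to-resistance bounds just stated. The key observation is that both halves of the theorem follow the identical two-step pattern: high edge-connectivity forces the relevant resistance distance below the threshold $1/2$, and a resistance distance below $1/2$ in turn guarantees localization. So the entire argument reduces to making precise the quantitative sense in which ``high'' edge-connectivity suppresses resistance, and then invoking Theorems~\ref{thm:loc1} and \ref{thm:loc2}.

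For the first assertion, I would start from the hypothesis that $a_{\tout}$ and $b_{\tin}$ are $k$-edge-connected in $G_b$ for some large $k$. Applying the bound \eqref{eq:connectivity-bound}, this gives $\Omega_{a_{\tout},b_{\tin}} = O(N^{2/3}/k)$, so once $k$ exceeds a fixed constant multiple of $N^{2/3}$ we obtain $\Omega_{a_{\tout},b_{\tin}} < 1/2$. Feeding this into Theorem~\ref{thm:loc1} immediately yields oscillatory localization between $\ket{ab}$ and $-\ket{ba}$, completing this half. I would note in passing that when the edge-disjoint paths connecting $a_{\tout}$ and $b_{\tin}$ are short, the sharper length-sensitive bound \eqref{eq:paths} can be used instead, relaxing the requirement on $k$ accordingly.

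The second assertion is proved by the same template, now working in $G$ rather than $G_b$. Assuming $a$ and $b$ are $k$-edge-connected in $G$, the bounds \eqref{eq:connectivity-bound} or \eqref{eq:paths} again drive $\Omega_{a,b} < 1/2$ for sufficiently large $k$, and Theorem~\ref{thm:loc2} then certifies that the self-flip state $\ket{\psi_0} = (\ket{ab} - \ket{ba})/\sqrt{2}$ is stationary.

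Since the substantive electric-network analysis was carried out in the preceding subsections, there is no genuine computational obstacle here; the only point requiring care is the interpretation of ``high,'' which is inherently asymptotic. The threshold requiring $k$ to grow on the order of $N^{2/3}$, coming from \eqref{eq:connectivity-bound}, is the honest general statement, but I would emphasize that for concrete graph families the length-sensitive bound \eqref{eq:paths} typically imposes much weaker connectivity requirements, which is what makes oscillatory localization so widespread across the examples studied in the following section.
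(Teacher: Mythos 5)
Your proposal is correct and follows essentially the same route as the paper: the theorem is presented there as a direct consequence of chaining the connectivity-to-resistance bounds \eqref{eq:connectivity-bound} and \eqref{eq:paths} into Theorems~\ref{thm:loc1} and \ref{thm:loc2}, with the paper likewise noting that $k = \omega(N^{2/3})$ forces $\Omega_{s,t} = o(1)$ and that the length-sensitive bound handles cases (such as the hypercube) where raw edge-connectivity is too small. Your remarks on the asymptotic meaning of ``high'' match the paper's own discussion.
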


In particular, $k = \omega(N^{2/3})$ means that $\Omega_{s,t} = o(1)$ by \eqref{eq:connectivity-bound} and therefore implies localization. For instance, edge-connectivity between any two vertices in the complete graph is high: $\Theta(N)$ and then $\Omega_{s,t} = O\left(1/N^{1/3}\right)$. On the other hand, edge-connectivity between two vertices $s$ and $t$ connected by an edge in a $d$-dimensional hypercube is only $\Theta(d) = \Theta(\log N)$, so \eqref{eq:connectivity-bound} is not enough. There are $\Theta(d)$ edge-disjoint paths of length 3, however, between $s$ and $t$; thus by \eqref{eq:paths}, the resistance is small: $\Omega_{s,t} = O\left(3/\log N\right)$. Therefore, edge-connectivity is another useful measure of graphs that implies (oscillatory) localization of quantum walks.

%-------------------------------------------------------------------------------
% Section
%-------------------------------------------------------------------------------

\section{\label{sec:examples} Examples}

A large variety of regular graphs have low resistance distance. For instance, the resistance distance between any two vertices $u$ and $v$ connected by an edge in a $d$-regular edge-transitive graph is given by \cite{Foster1949, Klein1993}:
\begin{equation}
	\label{eq:omega_ab}
	\Omega_{u,v} = \frac{N-1}{dN/2} \approx \frac{2}{d}.
\end{equation}
Similarly for the bipartite double graph,
\begin{equation}
	\label{eq:omega_ab_double}
	\Omega_{u_{\tout},v_{\tin}} = \frac{2N-1}{dN} \approx \frac{2}{d}.
\end{equation}
Thus in any case, the resistance distance is small provided the degree $d$ is high. Then from Theorems~\ref{thm:loc1} and \ref{thm:loc2}, $\ket{ab}$ oscillates with $-\ket{ba}$, while $(\ket{ab}-\ket{ba})/\sqrt{2}$ is localized, for edge-transitive graphs including complete graphs, complete bipartite graphs, hypercubes, and arbitrary-dimensional square lattices with degree greater than four. In addition, both of these states correspond to a single edge in the electric network. Then the current, which minimizes the energy dissipation, exactly corresponds to the flip state closest to $\ket{\psi_0}$, which implies equality in \eqref{eq:alpha_bound}. Then $|\alpha|^2$ can be found exactly by substituting \eqref{eq:omega_ab} or \eqref{eq:omega_ab_double} into \eqref{eq:resistance} and then into \eqref{eq:alpha_bound}, yielding $|\alpha|^2 = (dN - 2N + 2)/dN$ and $(dN - 2N + 1)/dN$, respectively.

Using this result, let us revisit our initial example in Fig.~\ref{fig:complete} of a quantum walk on the complete graph with starting state $\ket{\psi_0} = \ket{ab}$. Then the degree is $d = N-1$, and the probability overlap of the starting state with a flip state is
\[ |\fcoef|^2 =  \frac{N(N-1)-2N+1}{N(N-1)} =1 - \frac{1}{N-1} - \frac 1 N. \]
The value of $|\ucoef|^2$ can also be calculated explicitly as
\[ |\ucoef|^2 = \left|\braket{ab | \ustate_V}\right|^2 = \left|\frac{1}{\sqrt{dN}}\right|^2 = \frac{1}{N(N-1)}. \]
Using these precise values of $|\fcoef|^2$ and $|\ucoef|^2$ in Theorem \ref{thm:localization-bound}, the amplitude of the state being in its initial state $\ket{ab}$ at even timesteps is
\begin{align*}
	\left| \braket{ ab | U^{2t} | ab } \right|
	&\geq 2\left(|\fcoef|^2 + |\ucoef|^2\right)-1 \\
	&= 2\left(1 - \frac{1}{N-1} - \frac 1 N + \frac{1}{N(N-1)} \right) - 1\\
	&= 1 - \frac 4 N.
\end{align*}
On the other hand, the amplitude of being in its flipped version $-\ket{ba}$ at odd timesteps is
\begin{align*}
	\left| \braket{ ba | U^{2t+1} | ab } \right| 
	&\geq 2\max\left(1 - \frac{1}{N-1} - \frac 1 N,\frac{1}{N(N-1)}\right)-1 \\
	&= 2\left(1 - \frac{1}{N-1} - \frac 1 N\right)-1 \\
	&= 1 - \frac{2}{N-1} - \frac 2 N.
\end{align*}
Let us compare these bounds to the exact result in Section~\ref{sec:complete}. There, we found in \eqref{eq:complete_ab} that at even steps, the system was in $\ket{ab}$ with amplitude
\[ \frac{N-2}{N} + \frac{2}{N} \cos(\theta t) = 1 - \frac{2}{N} + \frac{2}{N} \cos(\theta t) \ge 1 - \frac{4}{N}, \]
so the bound we just derived from Theorem~\ref{thm:localization-bound} is tight. From Section~\ref{sec:complete}, we found in \eqref{eq:complete_ba} that at odd steps that the system was in $- | ba \rangle$ with amplitude
\[ \frac{N-3}{N-1} = 1 - \frac{2}{N-1}. \]
So the bound we just derived is close to being tight.

Expander graphs \cite{Hoory2006} are generally \emph{not} edge-transitive, and they have significant applications in communication networks, representations of finite graphs, and error correcting codes. Expander graphs of degree $d$ have resistance distance $\Theta(1/d)$ \cite{Chandra1996}, hence expander graphs of degree $d = \omega(1)$ exhibit localization of the initial state $(\ket{ab}-\ket{ba})/\sqrt{2}$.

A general graph, which may be irregular, also has low resistance distance $\Omega_{a,b} \leq \frac 4 d \leq \frac 8 N$ when its minimum degree $d \geq \lfloor\frac N 2\rfloor$ is high \cite{Chandra1996}. So $(\ket{ab}-\ket{ba})/\sqrt{2}$ is also localized for these graphs.

%-------------------------------------------------------------------------------
% Section
%-------------------------------------------------------------------------------

\section{\label{sec:conclusion} Conclusion}

We have introduced a new type of localization where a quantum walk alternates between two states. Such oscillation is exactly characterized by only two types of states, flip states and uniform states, both of which are 1-eigenvectors of $U^2$. By projecting the initial localized state $\ket{\psi_0}$ of the system onto these states, their respective amplitudes $\fcoef$ and $\ucoef$ give bounds on whether oscillatory localization occurs.

While finding $\beta$ exactly is trivial, simply bounding $\alpha$ requires more advanced analysis, which we provided using a bijection between flip states and current in electric networks. Using this, we proved a general result that oscillations on a graph $G$ occur when the power dissipation of an electric network defined on the bipartite double graph $G_b$ is low. For the initial state $\ket{ab}$, low effective resistance on $G_b$ can be used instead, while for $(\ket{ab}-\ket{ba})/\sqrt{2}$, low effective resistance on the original graph $G$ suffices. We can also use high connectivity instead of low effective resistance. Thus quantities in classical electric circuits imply a quantum behavior on graphs, namely oscillatory localization of quantum walks.

Further research includes relating power dissipation, effective resistance, and connectivity to other properties to determine what other starting states and graphs exhibit oscillatory localization. For example, Menger's theorem \cite{Menger1927,Halin1974} states that the maximum number of edge-disjoint paths between $s$ and $t$ is equal to the minimum $s$-$t$ edge-cut. Resistance is also related to commute time and cover time \cite{Chandra1996}. Oscillatory localization can also be investigated for quantum walks on different graphs or with different initial states, or for different definitions of quantum walks. This opens up the exploration of oscillations where the particle does not exactly return to its initial state, but has a global, unobservable phase. Finally, as our exact oscillatory localization is an example of a quantum walk periodic in two steps, another open area is investigating quantum walks with larger periods.

%-------------------------------------------------------------------------------
% Acknowledgments.
%-------------------------------------------------------------------------------

\begin{acknowledgments}
	This work was supported by the European Union Seventh Framework Programme (FP7/2007-2013) under the QALGO (Grant Agreement No.~600700) project and the RAQUEL (Grant Agreement No.~323970) project, the ERC Advanced Grant MQC, and the Latvian State Research Programme NeXIT project No.~1.
\end{acknowledgments}

%-------------------------------------------------------------------------------
% References.
%-------------------------------------------------------------------------------

\bibliography{refs}

\end{document}